\newcolumntype{L}[1]{>{\raggedright\let\newline\\\arraybackslash\hspace{0pt}}m{#1}}
\newcolumntype{C}[1]{>{\centering\let\newline\\\arraybackslash\hspace{0pt}}m{#1}}
\newcolumntype{R}[1]{>{\raggedleft\let\newline\\\arraybackslash\hspace{0pt}}m{#1}}
\DeclareMathOperator{\E}{\mathbb{E}}
\DeclareMathOperator{\A}{\mathcal{A}}
\DeclareMathOperator{\U}{\mathcal{U}}
\DeclareMathOperator{\prob}{Pr}
\DeclareMathOperator{\dist}{dist}
\DeclareMathOperator{\supp}{supp}
\theoremstyle{plain}
\newtheorem{theorem}{Theorem}
\newtheorem{lemma}[theorem]{Lemma}
\newtheorem{example}{Example}
\DeclareMathOperator*{\argmin}{arg\,min}
\DeclareMathOperator*{\soc}{SC}
\icmltitlerunning{Can a Few Decide for Many? The Metric Distortion of Sortition}
\begin{document}

\twocolumn[
\icmltitle{Can a Few Decide for Many? The Metric Distortion of Sortition}

\begin{icmlauthorlist}
\icmlauthor{Ioannis Caragiannis}{iannis}
\icmlauthor{Evi Micha}{evi}
\icmlauthor{Jannik Peters}{jannik}
\end{icmlauthorlist}

\icmlaffiliation{iannis}{Department of Computer Science, Aarhus  University, Aarhus, Denmark}
\icmlaffiliation{evi}{Computer Science, Harvard University, Cambridge, USA }
\icmlaffiliation{jannik}{ Research Group Efficient Algorithms, Faculty IV – Electrical Engineering and Computer Science, TU Berlin, Berlin, Germany}

\icmlcorrespondingauthor{Ioannis Caragiannis}{iannis@cs.au.dk}
\icmlcorrespondingauthor{Evi Micha}{emicha@seas.harvard.edu}
\icmlcorrespondingauthor{Jannik Peters}{jannik.peters@tu-berlin.de}

\icmlkeywords{Machine Learning, ICML}

\vskip 0.3in
]

\printAffiliationsAndNotice{}  %

\begin{abstract}
Recent works have studied  the design of algorithms for selecting representative sortition panels. However, the most central question remains unaddressed: Do these panels reflect the entire population's opinion? We present a positive answer by adopting the concept of metric distortion from computational social choice, which aims to quantify how much a panel's decision aligns with the ideal decision of the population when preferences and agents lie on a metric space. We show that uniform selection needs only logarithmically many agents in terms of the number of alternatives to achieve almost optimal distortion. We also show that Fair Greedy Capture, a selection algorithm introduced recently by~\citet{EbMi23}, matches uniform selection's guarantees of almost optimal distortion and also achieves constant ex-post distortion, ensuring a ``best of both worlds'' performance.
\end{abstract}

\section{Introduction}
In recent years, \emph{sortition} or the randomized selection of so-called \emph{citizens' assemblies} has gained traction around the globe. In such citizens' assemblies, a randomly selected subset of the population discusses and deliberates on societal issues, such as challenges from climate change and AI threats. Notable examples include the citizens' assemblies on climate change held in Belgium, France, or the United Kingdom or the Irish citizens' assembly on abortion which lead to a referendum on the topic and finally the legalization of abortion. This surge in the usage of citizens' assemblies has been supported by work in the AI and ML community, which have focused on the design of the underlying sampling methods \citep{FGG+21a} and the formulation of normative properties of sortition and related analysis \citep{EKM+22a, MST21a, EbMi23}. As a prominent example, the sampling method of \citet{FGG+21a} was recently deployed for the selection of the German citizens' assembly on nutrition.\footnote{See \url{https://www.bundestag.de/en/parliament/citizens_assemblies}.} 

As outlined by the sociologist \citet{Eng89a}, two of the main supposed advantages of sortition are the \emph{representation} of the underlying population and \emph{fairness}. The interpretation of fairness by the aforementioned works is quite straightforward: \emph{a selection procedure is fair if every member of the population is selected for the panel with equal probability}. Fairness can be achieved under {\em uniform selection} where a panel of given size is sampled uniformly at random. 
Representation on the other hand is not that easily defined. Both \citet{EKM+22a} and \citet{EbMi23} assumed that there is an underlying \emph{representation metric}, measuring the closeness between different members of the population, for instance by measuring the similarities of certain attributes. Using this metric, \citet{EKM+22a} measured the distance of each agent to their $q$-th closest member on the panel, and showed that if $q$ is more than half the panel size, uniform selection of the panel achieves a constant factor approximation to summing these distances and thus being in some sense representative of the underlying population. \citet{EbMi23} on the other hand, interpreted representation as \emph{proportional representation} and imported the notion of the core from social choice \citep{gillies1953some}, which intuitively requires that each sufficiently large group has sufficiently many representatives in the panel. They  show that uniform selection satisfies the  core in expectation, and introduce a  fair selection algorithm that always returns a panel in the core.   

However, while previous work has focused on the formation of panels that are representative of the population, it is still quite unexplored whether such panels  reach decisions that reflect the aggregate preference of the underlying population.
So, our goal in the current paper is 
\begin{quote}
  \textit{(a) to explore whether panels chosen using sortition are able to represent the interest of the whole population for different issues, and (b) to provide bounds on the panel size that is sufficient and necessary for achieving this.  }
\end{quote}

\paragraph{Our approach and results.} To measure the extent to which a panel's decision is close to the decision the whole population would collectively take, we utilize the concept of \emph{distortion} from the area of computational social choice \citep{PrRo06a,BCH+15}.
Specifically, first, we extend the definition of the representation metric of~\citet{EKM+22a} and \citet{EbMi23} by assuming that both agents and alternatives are embedded in the same metric space. Then, we compare the (expected) social cost of the best alternative (defined as its total distance from all agents) for the panel to the social cost of the best alternative for the overall population. If the ratio between the two is low, then the will of the panel is close to the will of the electorate and the preferences of the underlying population are well represented. 

In \Cref{sec:general-k}, we start by presenting the benefits of fair selection for achieving low distortion. In scenarios with arbitrarily many alternatives, deterministic selection cannot have a distortion considerably better than $5$ with small panels, while, in contrast, fair selection algorithms always achieve a distortion of $3$. Uniform selection, i.e., selecting each panel of a given size uniformly at random, is the most natural fair selection algorithm. For uniform selection, we specifically ask the question: for parameters $m$ and $\varepsilon$, how small the panel should be so that the selected alternative has an expected social cost that is at most a factor of $1+\varepsilon$ far from the optimal social cost for all instances with $m$ alternatives? In other words, we seek for low {\em ex-ante} distortion. In Section~\ref{sec:uniform}, we prove an upper bound of $\mathcal{O}\left(\varepsilon^{-2}\ln{\frac{m}{\varepsilon}}\right)$ on the panel size. This can be thought of as a ``dimensionality reduction'' result. No matter how large the population is, the panel size depends only on the number of alternatives and the accuracy parameter. Our proof uses a concentration inequality due to \citet{S74} for sums of random variables with limited correlation. We also prove a matching lower bound for every selection algorithm. Our main tool here is a new anti-concentration inequality for the hypergeometric probability distribution.

One of the criticisms of sortition outlined by \citet{Eng89a} is that, in extreme cases, the results will be socially disastrous. Indeed, uniform selection selects the worst possible panel with positive probability. In our terminology, the {\em ex-post} distortion can be as bad as $\Omega(n)$, where $n$ is the population size. In Section~\ref{sec:FGC}, we revisit the Fair Greedy Capture algorithm, introduced recently by \citet{EbMi23}. Even though it was originally designed with a different objective in mind (i.e., forming panels that are representative of a population under the definition of the core), we show that it achieves a ``best of both worlds'' performance guarantee. First, panels of size $\mathcal{O}\left(\varepsilon^{-3}\ln{\frac{m}{\varepsilon}}\right)$ yield an ex-ante distortion of $1+\varepsilon$ for all instances with $m$ alternatives. Second, we show that the ex-post distortion is only constant. The proof of our bound on the panel size exploits structural properties of Fair Greedy Capture that allow for the use of Hoeffding's inequality.

Lastly, in~\Cref{sec:experiments}, we empirically compare uniform selection and Fair Greedy Capture over real data. Among other interesting observations, we notice that Fair Greedy Capture consistently achieves better distortion than uniform selection, which converges quickly to $1$.

\section{Related Work}
We highlight three streams of research closely related to us.

\paragraph{Sortition.} The first one consists of computer science work on formalizing and analyzing sortition. Here, starting with the work of \citet{BGP19a}, there has been a focus on devising sampling methods for sortition \citep{FGG+21a, FGP21a, FGGP20a}. Besides the aforementioned papers by \citet{EKM+22a} and \citet{EbMi23}, the work closest to ours is by
\citet{MST21a}. They also analyze the distortion of sortition. However, they focus on a series of binary issues (i.e., two issues with utility either $0$ or $1$ for the agents) deriving almost tight bounds for this special case, showing that for such simple options the distortion of uniformly selecting a panel of size $k$ is $1 + \mathcal{O}\left(k^{-1/2}\right)$. Our work can be seen as a generalization of theirs, in the sense that we allow many alternatives and general metric utilities, showing that sortition maintains a very low distortion
even in a more general setting. Finally, \citet{AFP22a} studied (among other things) the metric distortion of first forming a panel of agents and then running a voting rule. They were able to show that selecting a panel of size $\mathcal{O}(\varepsilon^{-2}\ln{m})$ uniformly at random and then running Copeland's rule is enough to achieve distortion $5 + \varepsilon$ with high probability, while uniform selection with size $\mathcal{O}(\frac{m}{\varepsilon^2})$ yields distortion $3 + \varepsilon$, using the distortion-$3$ rule of \citet{GHS20a}. Our work additionally improves upon theirs and shows that to achieve distortion $3 + \varepsilon$ (or even better if we use the recent randomized voting rule of~\citet{CRWW24a}), a panel size of $\mathcal{O}\left(\varepsilon^{-2}\ln{\frac{m}{\varepsilon}}\right)$ is also sufficient. 

\paragraph{Distortion.}
Secondly, the field of distortion in computational social choice is apparently relevant. In particular, in recent years, the study of \emph{metric distortion}, in which the preferences of the agents come from a metric space, has gained traction, with several voting rules being shown to have good distortion. For instance, random dictatorship \cite{AnPo16a}, Copeland \cite{ABP15a}, a variant of the vote-by-veto rule \cite{KiKe22a}, or maximal lotteries \cite{CRWW24a} achieve (close to) optimal metric distortion. We refer to \citet{AFSV21a} for a survey and to \citet{ABFV22a}, \citet{GKP+23a}, \citet{EFLN23a}, and \citet{CF23} for some recent works on (non-metric) distortion. The work on distortion most related to ours is most likely the one by \citet{JaSk22a}. In their model, just like in the work of \citet{MST21a}, they measure distortion with regard to a series of binary decisions. However, their model differs in two aspects: (i) the voter set and the set of candidates for the committee are different, and (ii) the voters just provide a ranking over the candidates for the committee. In this setting, \citet{JaSk22a} show that no (deterministic) committee selection method can achieve a bounded distortion, even when restricted to alternatives which are top-choices of some committee.

\paragraph{Clustering.}
Finally, through our usage of the Fair Greedy Capture algorithm, we are inherently related to the proportional clustering literature and the Greedy Capture mechanism introduced by \citet{CFL+19a}. Since Fair Greedy Capture achieves constant factor approximations to both proportional fairness as studied by \citet{CFL+19a} or \citet{LLS+21a} and individual fairness as studied by \citet{JKL20a} (see \citet{KePe23a} for a proof of this and how these notions are related), this also shows that representation in the sense of selecting a panel with good distortion is also compatible with representation in the sense of proportionally or individually fair clustering as devised in the aforementioned works. 

\section{Model and Motivation}
For  $t \in \mathbb{N} $, define $[t]=\{1,\ldots, t\}$. 
Throughout the paper, we assume that we are given a set of agents $N = [n]$ and a set of alternatives $C = \{c_1, \dots, c_m\}$. 
Together $N \cup C$ lie in an underlying  (pseudo-) metric space, with a distance function $d \colon (N \cup C) \times (N \cup C) \to \mathbb{R}_{\ge 0}$. We denote with $d(i,j)$ the distance between  each $i,j\in N \cup C $. We assume that the following properties are satisfied by $d$: (a) $d(i,j)=d(j,i)$ (symmetry) and (b) for $i,j,\ell\in N \cup C$,  $d(i,j)\leq d(i,\ell)+d(\ell,j)$ (triangle inequality).

The \emph{cost} an agent $i \in N$ has for an alternative $c \in C$ is equal to $d(i,c)$, with a lower cost being ``better'' for the agent.  The  social cost of an alternative $c$ is equal to    $\soc(c) \coloneqq \sum_{i \in N} d(i,c)$, i.e., the sum of the costs over all agents. A panel $P$ is a subset of the agents, i.e., $P\subseteq N$. For such a panel $P$ and alternative $c \in C$, we let $\soc(P,c) \coloneqq \sum_{i \in P} d(i,c)$ denote that social cost of that alternative for the panel. Using this, we define $c(P) \coloneqq \argmin_{c \in C} \soc(P,c)$ to be the alternative with the lowest social cost for this panel, with ties among possible alternatives being broken arbitrarily. 

 While the locations of alternatives in the metric space may be unknown, the position of the agents is typically known. For example, each agent is characterized by a set of features crucial for a specific application, and its position in the metric space is a function of these features. Therefore, a {\em selection algorithm} $\A_k$ takes  as input only the locations of the agents  and outputs a distribution over panels of size $k$. A panel is in the support of $\A_k$ if it is returned with non-zero probability. 
 We call a selection algorithm fair, if the probability of any agent being on the panel is equal to $\frac{k}{n}$. We pay special attention to  \emph{uniform selection}, $\U_k$,  which selects each panel of size $k$ with equal probability. Uniform selection is fair by definition.
 
We consider the decision of panels selected by an algorithm to be ``good''  if its social cost is not much larger than the social cost of the optimum $c(N)$.  More formally, we define the {distortion} of a selection algorithm $\A_k$ as the worst-case ratio of the expected social cost of the alternatives selected by panels in the support of $\A_k$ and the minimum possible social cost over any instance, i.e.,
\begin{align*}
    \dist(\A_k)= \sup_d \frac{\E_{P \sim \A_{k}} [SC(c(P)) ] }{\soc(c(N))}.
\end{align*}
We will call this as ex-ante distortion, as well.

\begin{example}\label{ex:example-1}
    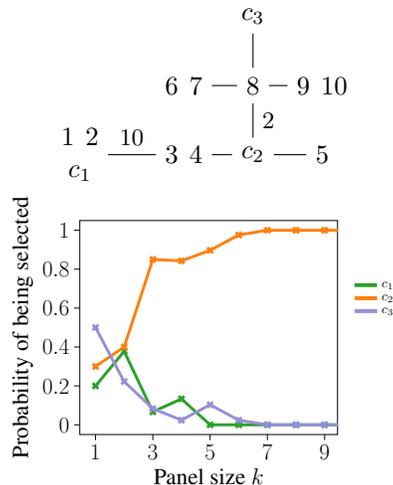
\begin{figure}
        \centering
        \begin{tikzpicture}[scale=0.92]
            \begin{scope}[yscale=-1]
                \node [align=center] (v1) at (.5,0.5) {$1\,\ 2$\\$c_1$};
                
                \node (v4) at (2,0.5) {$3 \,\ 4$};
                
                \node (v5) at (3,0.5) {$c_2$};
                
                \node (v6) at (4,0.5) {$5$};

                \node (v7) at (2,-0.5) {$6 \,\ 7$};

                \node (v8) at (3,-0.5) {$8$};

                \node (v10) at (3,-1.5) {$c_3$};

                \node (v9) at (4,-0.5) {$9 \,\ 10$};
                
                \draw (v1) -- node[above] {\footnotesize $10$} (v4);
                \draw (v4) -- (v5);
                \draw (v5) -- (v6);
                \draw (v5) -- node[right] {\footnotesize $2$} (v8);
                \draw (v7) -- (v8);
                \draw (v8) -- (v9);
                \draw (v8) -- (v10);
            \end{scope}
        \end{tikzpicture}
        \begin{tikzpicture}[scale=0.5, every plot/.append style={line width=2.2pt}]

\definecolor{chocolate2267451}{RGB}{226,74,51}
\definecolor{dimgray85}{RGB}{85,85,85}
\definecolor{gainsboro229}{RGB}{229,229,229}
\definecolor{mediumpurple152142213}{RGB}{152,142,213}
\definecolor{steelblue52138189}{RGB}{52,138,189}
\definecolor{darkorange25512714}{RGB}{255,127,14}
\definecolor{forestgreen4416044}{RGB}{44,160,44}
\begin{axis}[
legend cell align={left},
legend style={
  fill opacity=0.8,
  draw opacity=1,
  draw=none,
  at={(1.25,0.75)},
  text opacity=1,
  line width=3pt
},
legend entries={$c_1$, $c_2$, $c_3$},
tick align=outside,
tick pos=left,
ylabel near ticks,
xlabel near ticks,
x grid style={darkgray176},
xlabel = {Panel size $k$},
xmin=0.45, xmax=9.45,
xtick style={color=black},
y grid style={darkgray176},
ylabel = {Probability of being selected},
ymin=-0.05, ymax=1.05,
ytick style={color=black},every tick label/.append style={font=\LARGE}, 
label style={font=\LARGE},
xticklabels={$1$,$3$,$5$,$7$,$9$},
xtick={1,3,5,7,9}
]
\addlegendimage{forestgreen4416044}
\addlegendimage{darkorange25512714}
\addlegendimage{mediumpurple152142213}
\addplot [thick, forestgreen4416044, mark = x, mark size = 3, mark options = {solid} ]
table {%
1 0.2
2 0.377777777777778
3 0.0666666666666667
4 0.133333333333333
5 0
6 0
7 0
8 0
9 0
10 0
};
\addplot [thick, darkorange25512714, mark = x, mark size = 3, mark options = {solid}]
table {%
1 0.3
2 0.4
3 0.85
4 0.842857142857143
5 0.896825396825397
6 0.976190476190476
7 1
8 1
9 1
10 1
};
\addplot [thick, mediumpurple152142213, mark = x, mark size = 3, mark options = {solid}]
table {%
1 0.5
2 0.222222222222222
3 0.0833333333333333
4 0.0238095238095238
5 0.103174603174603
6 0.0238095238095238
7 0
8 0
9 0
10 0
};
\end{axis}

\end{tikzpicture}
        \caption{Metric spaces for \Cref{ex:example-1}. Edges without labels have length $1$. The graph indicates the probability of the different alternatives being the top choice of a uniformly selected panel, for different panel sizes. }
        \label{fig:example-distortion}

    \end{figure}

    As an example consider the graph depicted in the upper part of \Cref{fig:example-distortion} together with its shortest distance metric, e.g, the distance from agent $1$ to alternative $c_2$ is $10 + 1 = 11$.

    Now for the social costs, one can calculate that the total distance to alternative $c_1$ is $101$ to alternative $c_2$ is $39$ and to alternative $c_3$ is $49$. If a mechanism would select $\{1,2\}$ as the panel, their best alternative would be $c_1$ leading to a distortion of $\frac{101}{39} \sim 2.59$, while for instance $\{5,8\}$ would have $c_2$ as their best alternative with a distortion of $1$. In the lower part of \Cref{fig:example-distortion} we plot the probability for each alternative of being $c(P)$ for a uniformly at random selected panel $P$ of size $k$, with tie-breaking in favor of $c_1$. For instance, for $k = 2$, the probability for alternative $c_1$ is $\frac{17}{45}$, for $c_2$ is $\frac{2}{5}$ , and for $c_3$ it is $\frac{2}{9}$. Thus the distortion of uniform selection for $k = 2$ in this instance is 
    \[
    \frac{\frac{17}{45} 101 + \frac{2}{5} 39 + \frac{2}{9} 49}{39} = \frac{64.6\overline{4}}{39} \sim 1.66. 
    \]
\end{example}

\section{The Benefit of Fair Selection}\label{sec:general-k}
Before we turn to randomized mechanisms, we start the investigation of distortion with a general lower bound on the distortion of {\em any} deterministic mechanism. %
We show that for any panel size $k$ there is an instance in which no deterministic mechanism can reach a distortion of better than $5$. 

\begin{restatable}{theorem}{detlower} \label{det_lower}
Let $k,n$ be positive integers with $k\leq n$ and $\mathcal{A}_k$ a deterministic algorithm that selects a panel of size $k$ among $n$ agents. For any $\varepsilon>0$, there exists an instance with $n$ agents and two alternatives in which $\mathcal{A}_k$ has distortion of at least $5 - \frac{12k}{n+2k}-\varepsilon$.
\end{restatable}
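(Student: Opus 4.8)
The plan is to exploit the fact that a deterministic $\A_k$ reads only the agent submetric, so if I make that submetric completely symmetric then the algorithm has ``nothing to go on'' and I may place the two alternatives \emph{after} seeing which panel it returns. Concretely, I would take the $n$ agents to be equidistant, say $d(i,j)=2$ for all $i\neq j$ (a scaled simplex metric). Because this agent submetric is the same no matter how the alternatives are later placed, the deterministic $\A_k$ must output one particular panel $P_0$ with $|P_0|=k$; I then get to choose the distances from the two alternatives to the agents as a function of $P_0$, subject only to the triangle inequality. This is legitimate: since $\A_k$ sees only the (fixed, symmetric) agent locations, the instance that punishes $P_0$ is a valid choice of $d$ in the supremum defining $\dist(\A_k)$, so no averaging or multiple-worlds argument is needed.

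The construction I have in mind places the intended optimum $a$ close to the $n-k$ agents outside the panel and far from the $k$ panelists, and the ``bad'' alternative $b$ the other way around, so that $P_0$ narrowly prefers $b$ while the population prefers $a$. Writing the target distances as $d(i,a)=3,\ d(i,b)=3-\eta$ for $i\in P_0$ and $d(i,a)=1,\ d(i,b)=5-\eta$ for $i\notin P_0$, together with $d(a,b)=4$ and a small parameter $\eta>0$, one computes $\soc(a)=n+2k$ and $\soc(b)=5n-2k-\eta n$. Since $\soc(P_0,b)=(3-\eta)k<3k=\soc(P_0,a)$ we get $c(P_0)=b$, while for $k<n$ and small $\eta$ we have $\soc(a)<\soc(b)$, so $c(N)=a$ (the case $k=n$ is vacuous, as the claimed bound is then $1$). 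Hence the distortion is at least $\frac{5n-2k-\eta n}{n+2k}$, which tends to $\frac{5n-2k}{n+2k}=5-\frac{12k}{n+2k}$ as $\eta\to0$; choosing $\eta$ small enough yields the bound for any $\varepsilon>0$.

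The step I expect to be most delicate is verifying that these numbers actually define a metric while still attaining the extremal ratio. The binding constraints are the inequalities $|d(i,x)-d(j,x)|\le d(i,j)=2$ across the two agent types: they are exactly tight for $x=a$ (namely $|3-1|=2$) and for $x=b$ (namely $|(5-\eta)-(3-\eta)|=2$), and it is precisely this tightness that pins the ``far'' distance of the non-panel agents at $5-\eta$ rather than something larger, thereby capping the numerator at $5n$. Intuitively, the factor $5$ is forced because a panelist must prefer $b$, so $d(i,b)<d(i,a)=3$, and then any non-panel agent is within distance $2$ of a panelist, hence at distance below $5$ from $b$. I would finish by checking the remaining agent--agent--alternative and agent--alternative--alternative triangle inequalities one triple-type at a time; given only two agent types and two alternatives there is just a constant number of cases, several of which are tight, as expected for an extremal instance.
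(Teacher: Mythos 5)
Your proposal is correct and follows essentially the same route as the paper's own proof: the same equidistant agent set with a two-alternative gadget at distances $\{3,5\}$ versus $\{3,1\}$, the same small perturbation making the panel strictly prefer the costly alternative, and the same social-cost computation giving the ratio $\frac{5n-2k}{n+2k}=5-\frac{12k}{n+2k}$ (you merely swap the names of $a$ and $b$, put the perturbation only on the selected alternative, and take $d(a,b)=4$ instead of $4-\varepsilon$). Your explicit observation that a deterministic algorithm sees only the symmetric agent submetric, so the alternatives may be placed after its panel is fixed, is precisely the reasoning the paper uses implicitly when it defines the distances relative to the set $\mathcal{A}_k$.
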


\begin{proof}
Consider an instance with two alternatives $a$ and $b$ and a set $N$ of agents at distance $2$ from each other. Alternative $a$ has distance $3-\varepsilon$ from each agent in $\mathcal{A}_k$ and distance $5-\varepsilon$ from each agent in set $N\setminus \mathcal{A}_k$. Alternative $b$ has distance $3$ from each agent in set $\mathcal{A}_k$ and distance $1$ from each agent in set $N\setminus \mathcal{A}_k$. Finally, we can set $d(a,b) = 4 - \varepsilon$. It can be easily seen that the distances among agents and alternatives form a metric.
Clearly, the panel $\mathcal{A}_k$ will select alternative $a$, who has social cost $\soc(a)=(5-\varepsilon) (n-k)+(3-\varepsilon) k = (5-\varepsilon) n-2k$. The social cost of alternative $b$ is $\soc(b)=n-k+3k=n+2k$. Thus, the distortion is $\frac{(5-\varepsilon)n-2k}{n+2k}\geq 5-\frac{12k}{n+2k}-\varepsilon$.
\end{proof}

So, deterministic algorithms that select panels of size much smaller than the agent population cannot have a distortion considerably better than $5$. In contrast, we can show that \emph{any} fair selection algorithm achieve a strictly better ex-ante distortion.

\begin{restatable}{theorem}{generalupper}
Let $k,n$ be positive integers with $k\leq n$ and $\mathcal{A}_k$ a fair algorithm that selects an assembly of size $k$ among $n$ agents. Then, the ex-ante distortion of $\mathcal{A}_k$ is at most $3-\frac{2k}{n}$. \label{thm:general_dist_metric}
\end{restatable}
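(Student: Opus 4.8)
The plan is to bound the expected social cost of the panel's choice directly against $\soc(c(N))$, using exactly two ingredients: the optimality of $c(P)$ on the panel and the fairness marginals. Write $w := c(N)$ for the population optimum and, for a fixed panel $P$ in the support of $\mathcal{A}_k$, abbreviate $S_P := \sum_{j \in P} d(j,w)$. The only structural fact I will use about $c(P)$ is \emph{panel optimality}, namely $\soc(P, c(P)) \le \soc(P, w) = S_P$, and the only fact I will use about $\mathcal{A}_k$ is \emph{fairness}, which yields the clean marginal identity $\E_{P \sim \mathcal{A}_k}[S_P] = \sum_{j \in N} \prob[j \in P]\, d(j,w) = \frac{k}{n}\soc(w)$.

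First I would split the social cost of $c(P)$ over the whole population into in-panel and out-of-panel contributions, $\soc(c(P)) = \soc(P, c(P)) + \sum_{i \in N \setminus P} d(i, c(P))$. The in-panel part is handled for free by panel optimality, $\soc(P, c(P)) \le S_P$. For an out-of-panel agent $i$, I would route to $c(P)$ through the panel: averaging the triangle inequality $d(i, c(P)) \le d(i,j) + d(j, c(P))$ over all $j \in P$ gives $d(i, c(P)) \le \frac{1}{k}\sum_{j\in P} d(i,j) + \frac{1}{k}\soc(P, c(P))$. Crucially, because $i \notin P$ and $j \in P$ we always have $i \ne j$, so the second triangle inequality $d(i,j) \le d(i,w) + d(j,w)$ loses nothing on a diagonal term; summing over the $n-k$ out-of-panel agents and simplifying with $\soc(P,c(P)) \le S_P$ yields a per-panel bound of the shape $\soc(c(P)) \le \soc(w) + \frac{2(n-k)}{k}\, S_P$.

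Finally I would take the expectation over $P \sim \mathcal{A}_k$ and substitute the fairness identity $\E[S_P] = \frac{k}{n}\soc(w)$, which collapses the bound to $\E[\soc(c(P))] \le \big(1 + \tfrac{2(n-k)}{k}\cdot\tfrac{k}{n}\big)\soc(w) = (3 - \tfrac{2k}{n})\soc(w)$, as claimed. The main obstacle is not the triangle-inequality bookkeeping itself but squeezing out the $-\tfrac{2k}{n}$ term: applying the triangle inequality uniformly to \emph{all} agents (including those in $P$) only gives distortion $3$, and even separating out a single diagonal term gives only $3 - \tfrac{2}{n}$. The improvement to $3 - \tfrac{2k}{n}$ hinges on charging the entire panel's contribution through exact panel optimality rather than through lossy routing, and on using the precise fairness marginal $k/n$ rather than any cruder bound. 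As a sanity check I would verify the constant at the extremes: $k=1$ recovers the random-dictatorship bound $3-\tfrac{2}{n}$, while $k=n$ gives $1$, consistent with $c(P)=c(N)$.
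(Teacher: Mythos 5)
Your proof is correct and takes essentially the same approach as the paper's: the same decomposition into in-panel and out-of-panel cost, the same use of panel optimality $\soc(P,c(P))\le \soc(P,w)$, the same triangle-inequality routing of each outside agent through the panel (your ordering $i\to j\to c(P)$ combined with $d(i,j)\le d(i,w)+d(j,w)$ yields exactly the paper's per-panel bound $\soc(c(P))\le \soc(w)+\tfrac{2(n-k)}{k}\soc(P,w)$), and the same fairness marginal $\E[\soc(P,w)]=\tfrac{k}{n}\soc(w)$ to conclude. The only cosmetic difference is that the paper proves the bound against an arbitrary alternative $c'$ before specializing, while you fix $w=c(N)$ from the outset.
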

\begin{proof}
Consider an alternative $c'$; we will show that $\E[\soc(c(\mathcal{A}_k))] \leq \left(3-2k/n\right)\cdot \soc(c')$. 
For any $k$-sized set $P$ of agents, we clearly have 
\begin{align}\label{eq:upper-3-minus-2k-over-n-1}
    \soc(c(P)) &=\soc(P,c(P))+\soc(N\setminus P,c(P)).
\end{align}
By the definition of alternative $c(P)$, we have
\begin{align}\label{eq:upper-3-minus-2k-over-n-2}
    \soc(P,c(P)) &\leq \soc(P,c').
\end{align}
Also, using the triangle inequality, 
\begin{align}\nonumber
&\soc(N\setminus P, c(P))=\sum_{i\in N\setminus P}{d(i,c(P))}\\\nonumber
&\leq \sum_{i\in N\setminus P}{\left(d(i,c')+d(c',c(P))\right)}\\\label{eq:upper-3-minus-2k-over-n-3}
&=\soc(N\setminus P,c')+(n-k)\cdot d(c',c(P)).
\end{align}
The triangle inequality implies that $d(c',c(P))\leq d(i,c')+d(i,c(P))$ for every alternative $i\in P$. Thus,
\begin{align}\nonumber
    &d(c',c(P)) \leq \frac{1}{k}\sum_{i\in P}{\left(d(i,c')+d(i,c(P))\right)}\\\label{eq:upper-3-minus-2k-over-n-4}
    &=\frac{1}{k}\soc(P,c')+\frac{1}{k}\soc(P,c(P))\leq \frac{2}{k}\soc(P,c'),
\end{align}
where the last inequality follows by equation (\ref{eq:upper-3-minus-2k-over-n-2}). Using equation (\ref{eq:upper-3-minus-2k-over-n-4}), equation (\ref{eq:upper-3-minus-2k-over-n-3}) yields
\begin{align}\label{eq:upper-3-minus-2k-over-n-5}
    \soc(N\setminus P,c(P)) &\leq \soc(N\setminus P,c')+2\frac{n-k}{k}\soc(P,c').
\end{align}
Now, using equations (\ref{eq:upper-3-minus-2k-over-n-2}) and (\ref{eq:upper-3-minus-2k-over-n-5}), equation (\ref{eq:upper-3-minus-2k-over-n-1}) yields
\begin{align}\nonumber
&\soc(c(P))\\\nonumber
&\leq \soc(P,c')+\soc(N\setminus P,c')+2\frac{n-k}{k}\soc(P,c')\\\label{eq:upper-3-minus-2k-over-n-6}
&=\soc(c')+2\frac{n-k}{k}\soc(P,c').
\end{align}
Thus, using equation (\ref{eq:upper-3-minus-2k-over-n-6}), we have 
\begin{align}\nonumber
    &\E[\soc(c(\mathcal{A}_k))] = \sum_{P\in N_k}{\Pr[\mathcal{A}_k=P]\cdot \soc(c(P))}\\\nonumber
    &\leq \sum_{P\in N_k}{\Pr[\mathcal{A}_k=P]\cdot \soc(c')}\\\label{eq:upper-3-minus-2k-over-n-7}
    &\quad\quad +2\frac{n-k}{k}\sum_{P\in N_k}{\Pr[\mathcal{A}_k=P]\cdot \soc(P,c')}.
\end{align}
The first sum in the RHS of equation (\ref{eq:upper-3-minus-2k-over-n-7}) becomes
\begin{align}\nonumber
&\sum_{P\in N_k}{\Pr[\mathcal{A}_k=P]\cdot \soc(c')}\\\label{eq:upper-3-minus-2k-over-n-8}
&= \soc(c')\cdot \sum_{P\in N_k}{\Pr[\mathcal{A}_k]}=\soc(c'),
\end{align}
while, since algorithm $\mathcal{A}_k$ is fair, the second one becomes
\begin{align}\nonumber
&\sum_{P\in N_k}{\Pr[\mathcal{A}_k=P]\soc(P,c')}\\\nonumber
&= \sum_{P\in N_k}{\Pr[\mathcal{A}_k=P]\cdot \sum_{i\in P}{d(i,c')}}\\\nonumber
&= \sum_{i\in N}{d(i,c')\cdot \sum_{P\in N_k:i\in P}{\Pr[\mathcal{A}_k=P]}}\\\nonumber
&=\sum_{i\in N}{d(i,c')\cdot \Pr[i\in \mathcal{A}_k]}\\\label{eq:upper-3-minus-2k-over-n-9}
&=\frac{k}{n}\cdot \sum_{i\in N}{d(i,c')}=\frac{k}{n}\cdot \soc(c').
\end{align}
Hence, using equations (\ref{eq:upper-3-minus-2k-over-n-8}) and (\ref{eq:upper-3-minus-2k-over-n-9}), equation (\ref{eq:upper-3-minus-2k-over-n-7}) yields
\begin{align*}
    \E[\soc(c(\mathcal{A}_k))] &\leq \left(3-\frac{2k}{n}\right)\cdot \soc(c'),
\end{align*}
as desired.
\end{proof}
We note that for the case of $k = 1$ this is the bound for the famous random dictatorship mechanism \citep{AnPo16a}. Comparing the exact bound derived in \Cref{det_lower}, this shows that to even theoretically being able to match the ex-ante distortion of fair mechanisms for $k = 1$, a deterministic mechanism would need to have at least a panel size of $k = \frac{n}{4}$. 

We can further show that the bound in Theorem~\ref{thm:general_dist_metric} is tight.

\begin{restatable}{theorem}{loweruniform}
    For every $\varepsilon>0$ and positive integers $k,n$ with $k\leq n$, there exists an instance in which any selection algorithm $\mathcal{A}_k$ has an ex ante distortion of at least $3-\frac{2k}{n}-\varepsilon$.
    \label{thm:general_lower_uniform}
\end{restatable}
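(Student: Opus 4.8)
The plan is to exhibit a single metric that is simultaneously bad for \emph{every} algorithm, by arranging that each panel of size $k$ faces an equally attractive ``trap'' alternative that is nevertheless globally suboptimal. Fix a small $\delta = \delta(\varepsilon)>0$ to be chosen at the end. First I would place the $n$ agents at pairwise distance $2$; for every $k$-subset $S \subseteq N$ I introduce a trap alternative $a_S$ with $d(i,a_S)=1$ for $i \in S$ and $d(i,a_S)=3$ for $i \notin S$; and I add one further alternative $b$ with $d(i,b)=1+\delta$ for all $i \in N$. I would realize all of this as the shortest-path metric of the weighted graph having an edge of weight $2$ between every pair of agents, an edge of weight $1$ between $a_S$ and each $i \in S$, and an edge of weight $1+\delta$ between $b$ and each agent. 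Since any shortest-path metric automatically satisfies the triangle inequality, the only thing to verify here is that no detour shortens a distance below its intended value, so that the displayed distances hold exactly.

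Second, I would compute social costs. Every trap has $\soc(a_S)=k\cdot 1+(n-k)\cdot 3=3n-2k$, independent of $S$, while $\soc(b)=n(1+\delta)$. For $k<n$ and $\delta$ small we have $n(1+\delta)<3n-2k$, so $b$ is the unique global optimum with $\soc(c(N))=n(1+\delta)$; the case $k=n$ is trivial since the claimed bound is then $1$. Next I show that an \emph{arbitrary} panel $P$ selects exactly its own trap $a_P$: indeed $\soc(P,a_P)=k$, whereas $\soc(P,b)=k(1+\delta)>k$ and, for $S\neq P$, $\soc(P,a_S)=|P\cap S|+3|P\setminus S|=3k-2|P\cap S|\geq k+2>k$. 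Hence $c(P)=a_P$ (a strict minimizer, so tie-breaking never enters), and $\soc(c(P))=3n-2k$ regardless of which panel is chosen.

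The punchline is that $\soc(c(P))$ takes the \emph{same} value $3n-2k$ for every panel $P$, so any selection algorithm $\A_k$ satisfies $\E_{P\sim\A_k}[\soc(c(P))]=3n-2k$, and its distortion on this instance is
\[
\frac{3n-2k}{n(1+\delta)}=\frac{3-2k/n}{1+\delta}\geq\left(3-\tfrac{2k}{n}\right)(1-\delta)\geq 3-\tfrac{2k}{n}-3\delta,
\]
using $\tfrac{1}{1+\delta}\ge 1-\delta$ and $3-\tfrac{2k}{n}\le 3$. Choosing $\delta=\varepsilon/3$ yields the claimed bound $3-\tfrac{2k}{n}-\varepsilon$, and because the value is identical across panels the instance is bad even for algorithms that see the full metric, a fortiori for the model's location-blind ones.

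I expect the only genuine obstacle to be the metric verification in the first step: with $\binom{n}{k}$ traps sharing the agent vertices, I must rule out short detours such as $i\to a_{S'}\to j\to a_S$ or $i\to b\to j\to a_S$ that could collapse a distance of $3$ (for $i\notin S$) to something smaller; a quick case check shows every alternative route has length at least $3$ (and at least $3+2\delta$ through $b$), so the intended values survive. Everything after that is routine arithmetic, and the structural payoff—every panel being indistinguishable in outcome—is exactly what forces the bound to hold for all algorithms rather than merely for fair ones.
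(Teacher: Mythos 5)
Your proposal is correct and is essentially the paper's own construction up to rescaling by a factor of $3$: both introduce one ``trap'' alternative per $k$-subset (yours at distances $1/3$, the paper's at $3/9$) so that every realized panel strictly prefers its own trap of fixed social cost ($3n-2k$, resp.\ $9n-6k$), together with a single alternative nearly equidistant from all agents (yours at $1+\delta$ with $\delta=\varepsilon/3$, the paper's at $3+\varepsilon$) serving as the optimum, forcing every algorithm to the same expected cost. Your explicit shortest-path realization and metric verification is a careful addition the paper leaves implicit, but it does not alter the argument.
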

\begin{proof}
Consider the instance with a set $N$ of $n$ agents and $n \choose k$+1 alternatives defined as follows. There is an alternative $c$ at distance $3+\varepsilon$ from every agent in $N$. For every distinct set $K$ of $k$ agents from $N$, there exists an alternative $c_K$ at distance $3$ from the agents in $K$ and distance $9$ from the agents in $N\setminus K$. Notice that every panel $K$ returned by algorithm $\mathcal{A}_k$ with positive probability will select alternative $c_K$, who has social welfare $\soc(c_K)=3k+9(n-k)=9n-6k$. Alternative $c$ has minimum social welfare of $\soc(c)=(3+\varepsilon)n$. The ex ante distortion of algorithm $\mathcal{A}_k$ is then
$\frac{9n-6k}{(1+\varepsilon)n} = 3-\frac{2k}{n}-\frac{\varepsilon(3n-2k)}{(3+\varepsilon)n}\geq 3-\frac{2k}{n}-\varepsilon$.
\end{proof}
Note that the instance constructed in the proof of~\Cref{thm:general_lower_uniform} requires an exponential number of alternatives in terms of the number of agents. However, in the next sections, we see that when the number of alternatives does not depend on the number of agents, we can achieve almost optimal distortion, by sampling only logarithmically many number of agents in terms on the number of alternatives $m$. 

\section{Uniform Selection}\label{sec:uniform}

We now consider uniform selection and show that panels of size $\mathcal{O}(\varepsilon^{-2}\ln{\frac{m}{\varepsilon}})$ yield ex-ante distortion of $1+\varepsilon$ for every instance with $m$ alternatives.\footnote{Throughout this paper we assume that $\varepsilon \to 0$ for the $\mathcal{O}$-notation. } To prove the next theorem, we need to bound the contribution to the expected social cost of the alternative that can be returned by the panel, distinguishing between three classes of alternatives. Alternatives that can incur (ex-post) distortion at most $1+\varepsilon/2$ are easy to handle. To take care of alternatives of distortion between $1+\varepsilon/2$ and $73$, we use Serfling's inequality for the concentration of sampling without replacement \citep{S74}. However, applying this to alternatives incurring distortion higher than $73$ would introduce a (logarithmic) dependency on the number of agents, since the distortion of such alternatives can be $\Omega(n)$. So, to bound the contribution of these alternatives, we construct an explicit hypergeometric random variable underlying the selection process and use it instead to bound the selection probability appropriately.

\begin{restatable}{theorem}{maintheorem}
For every $\varepsilon\in (0,1]$ and integer $m>0$, the exists integer $k\in \mathcal{O}\left(\varepsilon^{-2}\ln{\frac{m}{\varepsilon}}\right)$, so that a uniformly random $k$-sized panel of agents, has ex ante distortion at most $1+\varepsilon$, for every set of $m$ alternatives.
    \label{main_theorem}
\end{restatable}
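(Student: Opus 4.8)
The plan is to fix an arbitrary metric instance with $m$ alternatives, let $c^* = c(N)$ be the population optimum, and control the probability that a uniformly random $k$-panel $P$ picks a ``bad'' alternative, i.e.\ one whose social cost is large relative to $\soc(c^*)$. The central idea is that for each agent $i$ and alternative $c$, the quantity $d(i,c)$ is a bounded random contribution, and a uniformly random panel samples these without replacement. For a \emph{fixed} alternative $c$, the panel's empirical average cost $\frac{1}{k}\soc(P,c)$ concentrates around the true average $\frac{1}{n}\soc(c)$. So $c$ will only be selected by the panel (i.e.\ $c = c(P)$) if its sampled cost beats that of $c^*$, which by concentration is close to $\frac{1}{n}\soc(c^*)$. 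If $c$ has genuinely large social cost, this becomes an unlikely large-deviation event, and I would bound its selection probability and hence its contribution to $\E[\soc(c(P))]$.

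The key difficulty, flagged already in the paragraph before the theorem, is that $d(i,c)$ is not uniformly bounded: a single alternative may lie arbitrarily far from the agents, so the na\"ive concentration bound involves a range that scales with $\soc(c)$ itself, and applying it to very-high-cost alternatives reintroduces a dependence on $n$. The plan is therefore to split $C$ into three regimes by the ex-post distortion $\soc(c)/\soc(c^*)$ of each alternative. First, alternatives with distortion at most $1+\varepsilon/2$ contribute at most $(1+\varepsilon/2)\soc(c^*)$ regardless of selection probability, so they are free. Second, for the ``moderate'' regime (distortion between $1+\varepsilon/2$ and some constant like $73$), the costs are within a bounded factor of $\soc(c^*)$, so I can invoke \emph{Serfling's inequality} \citep{S74} for sampling without replacement to show that each such $c$ is selected with probability exponentially small in $k\varepsilon^2$; a union bound over the at most $m$ such alternatives, together with $k \in \mathcal{O}(\varepsilon^{-2}\ln\frac{m}{\varepsilon})$, makes their total contribution negligible.

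The genuinely hard regime is the third one: alternatives with distortion exceeding $73$, whose cost can be $\Omega(n)$, where Serfling's range-dependent bound fails. Here the plan follows the hint: rather than bounding $\frac{1}{k}\soc(P,c)$ directly, I would identify an explicit \emph{hypergeometric} random variable built into the uniform selection. Concretely, the event $c = c(P)$ requires the panel to have sampled enough agents that are ``close'' to $c$ (cheaper under $c$ than under $c^*$); the number of such agents in the panel is hypergeometric with parameters depending on how many agents of the population favor $c$. Because $c$ has very high social cost, only a small population fraction can favor it strongly enough, so the panel would need an improbable overrepresentation of that fraction for $c$ to win. I would bound this tail of the hypergeometric (a standard Chernoff-type or Hoeffding bound for hypergeometrics) and multiply by the worst-case cost $\soc(c) = O(n\cdot\soc(c^*))$; the exponential decay in $k$ dominates the linear blow-up in $n$, so the contribution again vanishes after summing over the at most $m$ such alternatives.

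Finally I would assemble the three bounds: $\E[\soc(c(P))] \le (1+\varepsilon/2)\soc(c^*) + (\text{moderate term}) + (\text{heavy term})$, and choose the hidden constants in $k = \mathcal{O}(\varepsilon^{-2}\ln\frac{m}{\varepsilon})$ large enough that the last two terms are each at most $\frac{\varepsilon}{4}\soc(c^*)$, giving total distortion at most $1+\varepsilon$. The threshold constant $73$ and the $\ln\frac{m}{\varepsilon}$ (rather than $\ln m$) factor are exactly what is needed to make the union bounds close: the $\frac{1}{\varepsilon}$ inside the logarithm absorbs the $\varepsilon$-dependent slack from splitting the regimes. I expect the delicate bookkeeping to be in the heavy regime, where relating ``$c$ wins the panel'' to a clean hypergeometric event, and verifying that its tail decays fast enough to beat the $\Omega(n)$ cost, is the crux of the argument.
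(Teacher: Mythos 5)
Your three-regime decomposition, your treatment of the free regime and of the moderate regime (Serfling's inequality plus a union bound over at most $m$ alternatives with threshold constant $73$), and your structural reading of the heavy regime (few agents can strongly favor a very expensive alternative, so $c(P)=a$ requires an improbable overrepresentation of that group, which is a hypergeometric event) all match the paper's proof. The genuine gap is in how you close the heavy regime. You propose a \emph{standard additive} Chernoff/Hoeffding tail for the hypergeometric, yielding a selection probability of the form $e^{-\Theta(k)}$, which you then multiply by the worst-case excess cost $\soc(a)-\soc(c') = O(n)\cdot \soc(c')$, claiming that ``the exponential decay in $k$ dominates the linear blow-up in $n$.'' This step fails: $k\in\mathcal{O}\bigl(\varepsilon^{-2}\ln\frac{m}{\varepsilon}\bigr)$ is independent of $n$, so for fixed $k,m,\varepsilon$ the product $n\,e^{-ck}$ diverges as $n\to\infty$. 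Closing the argument with an additive tail would force $k=\Omega(\ln n)$, which is precisely the $n$-dependence the theorem must avoid --- the dead end the paper's preamble explicitly warns about.

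What the paper does instead, and what your plan is missing, is a \emph{multiplicative} tail whose base shrinks with the relevant population fraction, paired against each alternative's \emph{own} excess cost rather than a uniform worst case. With $D=d(a,c')$, $L=\{i: d(i,c')>D/4\}$ and $\ell=|L|$ (agents far from the optimum, a slight variant of your ``agents favoring $a$'' but structurally equivalent), the paper shows (i) $\ell\le 4n\,\soc(c')/(\soc(a)-\soc(c'))$, (ii) $c(P)=a$ forces $|P\cap L|\ge k/3$, and (iii) $\Pr[|P\cap L|\ge k/3]\le (9\ell/n)^{k/3}$ --- proved not via Hoeffding but by observing that in this regime the hypergeometric mode lies below $k/3$, so the pmf is decreasing on the tail, and estimating $\binom{k}{k/3}(\ell/n)^{k/3}$ directly. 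Since $\ell/n$ is inversely proportional to $\soc(a)-\soc(c')$, one factor of the base cancels the excess cost exactly, and the remaining $k/3-1\ge\log\frac{72m}{\varepsilon}$ factors of at most $\frac12$ deliver the per-alternative target $\Pr[c(P)=a]\le \frac{\varepsilon\,\soc(c')}{2m\,(\soc(a)-\soc(c'))}$, with $n$ nowhere in $k$. If you want an off-the-shelf substitute, the relative-entropy form of the Chernoff--Hoeffding bound (valid for the hypergeometric), $\Pr[|P\cap L|\ge k/3]\le \exp\bigl(-k\,\mathrm{KL}\bigl(\tfrac13\,\|\,\ell/n\bigr)\bigr)\le (C\ell/n)^{k/3}$ for an absolute constant $C$, achieves the same multiplicative decay; the plain additive version you invoke, and the ``$e^{-ck}$ beats $n$'' accounting built on it, do not.
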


\begin{proof}
Let $\varepsilon\in (0,1]$ and $k$ be the smallest integer multiple of $3$ that is at least $\max\left\{\frac{25}{2\varepsilon^2}\ln{\frac{144m}{\varepsilon}},3+3\log{\frac{72m}{\varepsilon}}\right\}$. Clearly, $k\in \mathcal{O}\left(\varepsilon^{-2}\ln{\frac{m}{\varepsilon}}\right)$. \footnote{Here $\ln$ is the logarithm with base $e$ and $\log$ is the logarithm with base $2$. }
We will abbreviate $c(N)$ as $c'$.  We denote by $W$ the set of alternatives of social cost at most $\left(1+\frac{\varepsilon}{2}\right)\cdot \soc(c')$. We also denote by $P$ the random panel selected. Unless otherwise specified, expectations and probabilities are defined over the selection of set $P$ among all $k$-sized subsets of agents uniformly at random. 
Then,
\begin{align}\nonumber
    &\E[\soc(c(P))]\\\nonumber 
    &\leq \left(1+\frac{\varepsilon}{2}\right)\cdot \Pr[c(P)\in W]\cdot \soc(c')\\\nonumber
    &\quad\quad +\sum_{a\in C\setminus W}{\Pr[c(P)=a]\cdot \soc(a)}\\\nonumber
    &\leq \left(1+\frac{\varepsilon}{2}\right)\cdot \left(1-\sum_{a\in C\setminus W}{\Pr[c(P)=a]}\right)\soc(c')\\\nonumber
    &\quad\quad +\sum_{a\in C\setminus W}{\Pr[c(P)=a]\cdot \soc(a)}\\\nonumber
    &\leq \left(1+\frac{\varepsilon}{2}\right)\cdot \soc(c')\\\label{eq:break-distortion-bound}
    &\quad\quad +\sum_{a\in C\setminus W}{\Pr[c(P)=a]\cdot (\soc(a)-\soc(c'))}.
\end{align}
We will show that, for every alternative $a\in C\setminus W$, it holds that
\begin{align}\label{eq:bound-prob-for-candidate-not-in-W}
    \Pr[c(P)=a] &\leq \frac{\varepsilon\cdot \soc(c')}{2m\cdot (\soc(a)-\soc(c'))}.
\end{align}
This, using (\ref{eq:bound-prob-for-candidate-not-in-W}), equation (\ref{eq:break-distortion-bound}) will yield 
\begin{align*}
    \E[\soc(c(P))]
    &\leq \left(1+\frac{\varepsilon}{2}\right)\cdot \soc(c')+ \frac{\varepsilon|C\setminus W|}{2m}\cdot \soc(c')\\
    &\leq (1+\varepsilon)\cdot \soc(c'),
\end{align*}
as desired.

In the following, we prove inequality (\ref{eq:bound-prob-for-candidate-not-in-W}) for every alternative $a\in C\setminus W$. We distinguish between alternatives in $C\setminus W$ of low and high social cost.

\textbf{Case 1: $\soc(a)-\soc(c')\leq  72\cdot \soc(c')$.} In this case we use a concentration inequality by~\citet{S74}.
\begin{theorem}[\citet{S74}]\label{thm:serfling74}
Let $c_1$, $c_2, \dots, c_n$ be $n$ real numbers from the interval $[\alpha,\beta]$ and consider the process of selecting $k$ of them uniformly at random without replacement. Let $X$ be the random variable denoting the sum of the $k$ numbers selected. Then, $\E[X]=\frac{k}{n}\sum_{i=1}^n{c_i}$ and for every $t>0$, its holds that
\begin{align*}
\Pr[X-\E[X] \geq t] &\leq \exp\left(-\frac{2t^2}{k(\beta-\alpha)^2}\right).
\end{align*}
\end{theorem}

Recall that for any panel $P\subseteq N$ of agents,
\begin{align*}
\soc(P,c')-\soc(P,a) &=\sum_{i\in P}{(d(i,c')-d(i,a))}.
\end{align*}
Furthermore, using $D=d(a,c')$, by the triangle inequality, we have
\begin{align*}
    -D &\leq d(i,c')-d(i,a)\leq D,
\end{align*}
for every agent $i\in N$. Thus, when selecting panel $P$ uniformly at random among all $k$-sized sets of agents, the random variable $\soc(P,c')-\soc(P,a)$ follows the distribution of random variable $X$ in Theorem~\ref{thm:serfling74} with $c_i=d(i,c')-d(i,a)$ for $i\in N$, $\alpha=-D$, and $\beta=D$. Then, Theorem~\ref{thm:serfling74} yields
\begin{align*}
    \E[\soc(P,c')- \soc(P,a)]&=-\frac{k}{n}\left(\soc(a)-\soc(c')\right)
\end{align*}
and, using $t=\frac{k}{n}\cdot (\soc(a)-\soc(c'))$,
\begin{align}\nonumber
    &\Pr[\soc(P,c')\geq \soc(P,a)]\\\label{eq:prob-good-candidate-a-wins-opt}
    &\leq \exp\left(-\frac{2k(\soc(a)-\soc(c'))^2}{n^2D^2}\right).
\end{align}
Notice that,%
\begin{align}\nonumber
    \soc(a)-\soc(c') &\geq \frac{\varepsilon}{5}(\soc(a)+\soc(c'))\\\nonumber
    &=\frac{\varepsilon}{5}\cdot \sum_{i\in N}{(d(i,a)+d(i,c'))}\\\label{eq:soc-difference-good-candidate}
    &\geq \frac{\varepsilon n D}{5}.
\end{align}
The first inequality follows by our assumption that $a\in C\setminus W$ and since $\varepsilon\in (0,1]$ and the second one by the triangle inequality. 

Now, notice that an alternative $a$ is selected by the panel $P$ only if $\soc(P,c')\geq \soc(P,a)$. Using this observation, together with equations (\ref{eq:soc-difference-good-candidate}) and (\ref{eq:prob-good-candidate-a-wins-opt}), and the facts that $k\geq \frac{25}{2\varepsilon^2} \ln{\frac{144m}{\varepsilon}}$ and $\soc(a)-\soc(c')\leq 72 \cdot\soc(c')$, we obtain
\begin{align*}
    \Pr[c(P)=a] &\leq \Pr[\soc(P,c')\geq \soc(P,a)]\\
&\leq \exp\left(-2k\varepsilon^2/25\right)\leq \frac{\varepsilon}{144m}\\
&\leq \frac{\varepsilon\cdot \soc(c')}{2m\cdot (\soc(a)-\soc(c'))},
\end{align*}
completing the proof of inequality (\ref{eq:bound-prob-for-candidate-not-in-W}) in Case 1.

\textbf{Case 2: $\soc(a)-\soc(c')\geq 72\cdot \soc(c')$.} Define $L=\left\{i\in N: d(i,c')>\frac{D}{4}\right\}$ and let $\ell = |L|$. We will prove two useful lemmas.

\begin{restatable}{lemma}{smallell}\label{lem:small-ell}

    It holds that \(\ell \leq \frac{4n\cdot \soc(c')}{\soc(a)-\soc(c')}.\)
\end{restatable}
\begin{proof}
By the definition of set $L$, we have $\soc(c')\geq \sum_{i\in L}{d(i,c')}>\frac{D\ell}{4}$. Thus, we have
    \begin{align*}
        \soc(a)-\soc(c') &= \sum_{i\in N}{(d(i,a)-d(i,c'))}\\
        &\leq nD\leq \frac{4n\cdot \soc(c')}{\ell}.
    \end{align*}
\end{proof}

\begin{restatable}{lemma}{badcand}\label{lem:bad-candidate-wins}
    For every $k$-sized set $P$ of agents, it that holds $c(P)=a$ only if $|L\cap P|\geq k/3$.
\end{restatable}
\begin{proof}
Assume otherwise that $|L\cap P|<k/3$. Using the triangle inequality and the definition of set $L$, it holds that $d(i,a)-d(i,c')\geq D-2d(i,c')\geq D/2$ for $i\not\in L$. Also, again by the triangle inequality, $d(i,a)-d(i,c')\geq -D$ for $i\in L$. Thus,
\begin{align*}
    \soc(P,a)-\soc(P,c')&=\sum_{i\in P\cap L}{(d(i,a)-d(i,c'))}\\
    &\quad\quad +
    \sum_{i\in P\setminus L}{(d(i,a)-d(i,c'))}\\
    &\geq |P\setminus L|\cdot D/2-|P\cap L|\cdot D\\
    &=kD/2-|P\cap L|\cdot 3D/2>0,
\end{align*}
contradicting the assumption that $c(P)=a$.
\end{proof}

Lemma~\ref{lem:bad-candidate-wins} implies that $\Pr[c(P)=a]=0$ if $\ell<k/3$. In the following, we bound the probability $\Pr[c(P)=a]$ assuming that $\ell\geq k/3$. 

Notice that the quantity $|P\cap L|$ follows a hypergeometric distribution with a population size of $n$, $\ell$ success states, and $k$ draws. The mode of the  distribution of $|P\cap L|$ is at most $\frac{(k+1)(\ell+1)}{n+2}$ which, due to Lemma~\ref{lem:small-ell} and the assumption $\soc(a)-\soc(c')\geq 72\cdot \soc(c')$ in Case 2, is at most $\frac{(k+1)\cdot (n/18+1)}{n+2}<k/3$. Thus, $\Pr[|P\cap L|=j]$ is decreasing in $j$ for $j\geq k/3$ and
\begin{align}\nonumber
    &\Pr[|P\cap L|\geq k/3]=\sum_{j=k/3}^k{\Pr[|P\cap L|=j]}\\\nonumber
    &\leq \left(\frac{2k}{3}+1\right)\Pr[|P\cap L|=k/3]\\\nonumber
    &=\left(\frac{2k}{3}+1\right) \frac{{\ell \choose k/3} \cdot {n-\ell \choose 2k/3}}{{n \choose k}}\\\nonumber
    &= \left(\frac{2k}{3}+1\right) {k \choose k/3} \cdot \prod_{i = 1}^{k/3} \frac{(\ell - k/3 + i)}{n - k/3 + i}\\\nonumber
    &\quad\quad \cdot\prod_{i = 1}^{2k/3} \frac{(n-\ell - 2k/3 + i)}{n - k + i}\\\label{eq:prob-bad-candidate-wins}
    &\leq \left(\frac{2k}{3}+1\right) {k \choose k/3} \left(\frac{\ell}{n}\right)^{k/3}\leq \left(\frac{9\ell}{n}\right)^{k/3}.
\end{align}
The second inequality follows since $k/3\leq \ell\leq n$ and the third one by verifying that $(\frac{2k}{3}+1)\cdot {k \choose k/3} \leq 9^{k/3}$ for every $k\geq 3$ which is a multiple of $3$. 

Now, we use Lemma~\ref{lem:bad-candidate-wins}, inequality (\ref{eq:prob-bad-candidate-wins}), the fact that $k\geq 3+3\log{\frac{72m}{\varepsilon}}$, Lemma~\ref{lem:small-ell}, and the assumption $\soc(a)-\soc(c')\geq 72\cdot\soc(c')$ in Case 2 to get
\begin{align*}
    \Pr[c(P)=a] &\leq \Pr[|P\cap L|\geq k/3]\\
    &\leq \left(\frac{9\ell}{n}\right)^{1+\log{\frac{72m}{\varepsilon}}}\\
    &\leq \left(\frac{36\cdot \soc(c')}{\soc(a)-\soc(c')}\right)^{1+\log{\frac{72m}{\varepsilon}}}\\
    &\leq \frac{36\cdot \soc(c')}{\soc(a)-\soc(c')}\cdot \left(\frac{1}{2}\right)^{\log{\frac{72m}{\varepsilon}}}\\
    &=\frac{\varepsilon\cdot \soc(c')}{2m\cdot (\soc(a)-\soc(c')}.
\end{align*}
This completes the proof of inequality (\ref{eq:bound-prob-for-candidate-not-in-W}) in Case 2 and, subsequently, of Theorem~\ref{main_theorem}.
\end{proof}

We now show that Theorem~\ref{main_theorem} is nearly tight.

\begin{restatable}{theorem}{lowerlogm}\label{thm:lower-bound-small-panels}
    Let $\varepsilon\in (0,1/30]$, and $n,m,k$ be positive integers such that $n$ is even, $m\geq 1+6^9$, and $k\leq \min\left\{\frac{n}{2},m-1, \frac{\ln{(m-1)}}{288\varepsilon^{2}}\right\}$. Then, there exists a set of $n$ agents such that every probability distribution over panels of agents of size $k$ has distortion more than $1+\varepsilon$ for some instance with $m$ alternatives.
\end{restatable}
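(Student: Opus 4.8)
The plan is to prove this lower bound by Yao's principle combined with a randomized ``trap'' construction, the technical heart being a tail \emph{lower} bound (anti-concentration) for the hypergeometric distribution. I would fix the agent locations and a central optimum and randomize only the alternatives. Concretely, place the $n$ agents so that all pairwise agent distances equal $1$, and add one ``good'' alternative $c^\ast$ at distance $1$ from every agent, so that $\soc(c^\ast)=n=:S$. I then define a distribution $\mathcal D$ over instances sharing these agents: each contains $m-1$ trap alternatives $c_1,\dots,c_{m-1}$, where for every $j$ I draw an \emph{independent} uniformly random $t$-subset $T_j\subseteq N$ (with $t=qn$ and $q<\tfrac{1}{2}$ a parameter) and set $d(i,c_j)=\tfrac{1}{2}$ for $i\in T_j$ and $d(i,c_j)=\tfrac{3}{2}$ otherwise; all remaining (alternative--alternative and alternative--$c^\ast$) distances are set to $1$. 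Since every realized distance lies in $\{\tfrac{1}{2},1,\tfrac{3}{2}\}$ and the agents are equidistant, the triangle inequality can be checked to hold for \emph{every} realization, and because each $T_j$ has the fixed size $t$, every trap has the same social cost $\soc(c_j)=\tfrac{1}{2}t+\tfrac{3}{2}(n-t)=n+\Delta$ with $\Delta=(\tfrac{1}{2}-q)\,n>0$. Hence $c^\ast$ is the optimum with value $S$ in every instance in the support of $\mathcal D$.

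Because $\soc(c(N))=S$ is constant over the support of $\mathcal D$, for any distribution $\mathcal{A}_k$ over $k$-panels we have
\[
\sup_d \frac{\E_{P\sim\mathcal{A}_k}[\soc(c(P))]}{\soc(c(N))} \;\ge\; \frac{1}{S}\,\E_{P\sim\mathcal{A}_k}\,\E_{d\sim\mathcal D}\!\big[\soc_d(c_d(P))\big],
\]
so it suffices to show $\E_{d\sim\mathcal D}[\soc_d(c_d(P))]>(1+\varepsilon)S$ for \emph{every fixed} panel $P$ of size $k$. For a fixed $P$, a short calculation gives $\soc_d(P,c_j)=\tfrac{3}{2}k-|P\cap T_j|$, so the panel strictly prefers trap $c_j$ to $c^\ast$ exactly when $|P\cap T_j|>\theta:=k/2$; in that event the panel's chosen alternative is \emph{some} trap, all of which have the same excess $\Delta$. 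Consequently, using that the $T_j$ are independent,
\[
\E_{d\sim\mathcal D}[\soc_d(c_d(P))]-S \;\ge\; \Delta\cdot\Pr\big[\exists j:\ |P\cap T_j|>\theta\big] \;=\; \Delta\big(1-(1-p)^{m-1}\big),
\]
where $p:=\Pr[|P\cap T_j|>\theta]$ is the overflow probability of a single random trap set. Here $|P\cap T_j|$ is hypergeometric with population $n$, $k$ draws, and $t$ successes, with mean $kq=\theta-(\tfrac{1}{2}-q)k$, so the threshold $\theta$ sits just \emph{above} the mean by the small gap $(\tfrac{1}{2}-q)k$.

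The crux, and the step I expect to be the main obstacle, is to lower bound $p$: standard Chernoff/Serfling-type estimates only give \emph{upper} bounds on this tail, whereas what is needed is an \emph{anti-concentration} inequality showing that a hypergeometric variable exceeds its mean by $g$ with probability at least $\exp(-C g^2/\sigma^2)$ throughout the relevant regime, where $\sigma^2=kq(1-q)\tfrac{n-k}{n-1}=\Theta(k)$ (using $q\approx\tfrac{1}{2}$ and $\tfrac{n-k}{n-1}\ge\tfrac{1}{2}$, which needs $k\le n/2$). Choosing $\tfrac{1}{2}-q=\Theta(\varepsilon)$ makes $g=\Theta(\varepsilon k)$ and $g^2/\sigma^2=\Theta(\varepsilon^2 k)$, so such a bound would yield $p\ge \exp(-O(\varepsilon^2 k))$; the constant $288$ in the hypothesis $k\le \ln(m-1)/(288\varepsilon^2)$ is precisely what guarantees $p\,(m-1)\ge \exp(\Omega(\varepsilon^2 k))\to\infty$, hence $1-(1-p)^{m-1}\ge \tfrac{1}{2}$.

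Finally, with $\Delta=(\tfrac{1}{2}-q)n=\Theta(\varepsilon)\,n$ tuned so that $\Delta\cdot\tfrac{1}{2}>\varepsilon n=\varepsilon S$ (e.g.\ $\tfrac{1}{2}-q=3\varepsilon$), the displayed identity gives $\E_{d\sim\mathcal D}[\soc_d(c_d(P))]>(1+\varepsilon)S$ for every $P$, which by the Yao reduction produces an instance of distortion more than $1+\varepsilon$, completing the argument. The remaining side conditions ($n$ even, $k\le m-1$, $m\ge 1+6^9$, $\varepsilon\le 1/30$) are used to make $t=qn$ an admissible integer, to ensure there are enough traps for the union/product step, and to make the anti-concentration bound valid with the stated constants.
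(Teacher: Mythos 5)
Your scaffolding is sound and is essentially the paper's own construction in a different parametrization. The paper places the agents at pairwise distance $2$, a universally good alternative $c_0$ at distance $2-4\varepsilon$ from everyone, and $m-1$ independent trap alternatives, each at distance $1$ from a uniformly random set of exactly $n/2$ agents and $3$ from the rest; it then averages over the random instance to reduce to a fixed panel $P$ (your Yao step), observes that $c(P)=c_0$ forces $|P\cap S_i|\le k/2+2\varepsilon k$ for every $i$, and drives $\prod_i\left(1-\Pr\left[|P\cap S_i|>k/2+2\varepsilon k\right]\right)$ below $1/2$. You move the $\varepsilon$-slack from the good alternative's distance into the trap-set size ($q=\tfrac12-3\varepsilon$ instead of $q=\tfrac12$), which is an equivalent reformulation; your metric verification, the identity $\soc(P,c_j)=\tfrac32 k-|P\cap T_j|$, and the inequality $\E_d[\soc_d(c_d(P))]-S\ge\Delta\left(1-(1-p)^{m-1}\right)$ are all correct and mirror the paper's Lemma~\ref{lem:prob-half} reduction.

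The genuine gap is that the step you yourself flag as ``the main obstacle'' --- the anti-concentration lower bound $p\ge\exp(-Cg^2/\sigma^2)$ --- is assumed, not proved, and it is precisely where all of the technical work in the paper lives. The paper proves a pointwise bound (\Cref{lem:anti-concentration}), namely $\Pr[|P\cap S_i|=\ell]\ge\frac{1}{\sqrt{k}}\left(\frac{k}{\ell}-1\right)^{2\ell-k}$ for $10\le k\le n/2$ and $k/2\le\ell\le 2k/3$, via a careful Stirling computation, and only for the symmetric case of exactly $n/2$ success states; your construction needs the asymmetric analogue with $qn$ success states, so that computation must be redone (and $t=(\tfrac12-3\varepsilon)n$ need not even be an integer --- evenness of $n$ only makes $n/2$ integral --- so a rounding step is missing too). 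Three regime issues you gloss over must also be handled to meet the theorem's exact hypotheses: (i) the prefactor $1/\sqrt{k}$ consumes part of the probability budget and is absorbed in the paper only through the hypothesis $k\le m-1$, which splits $\ln(m-1)$ between the prefactor and the exponential; (ii) the Stirling-based bound is proved only for $k\ge 10$, and for $k\le 9$ the paper argues separately that the entire panel lands inside a single trap set with probability at least $(2e)^{-k}\ge 6^{-9}$ --- this, not ``having enough traps for the union/product step,'' is what the hypothesis $m\ge 1+6^9$ is actually for, so your closing remark misattributes its role; (iii) your slack $\tfrac12-q=3\varepsilon$ makes the deviation $g\approx 3\varepsilon k$ rather than the paper's $2\varepsilon k$, and while the tight Gaussian heuristic (roughly $e^{-18\varepsilon^2 k}$) suggests this fits under the budget $k\le\ln(m-1)/(288\varepsilon^2)$, the loose estimates one can actually prove (the paper's chain already yields only $e^{-144\varepsilon^2 k}$ at slack $2\varepsilon$, and it separately treats the boundary case $\varepsilon k<1$ where threshold integrality dominates) leave no room for unverified constants. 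In short: the architecture matches the paper's proof, but without a proof of the hypergeometric anti-concentration lemma and its small-$k$ and constant-tracking companions, the argument is a plan rather than a proof.
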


\section{Distortion Guarantees for Fair Greedy Capture}\label{sec:FGC}

One downside of uniform selection is that it does not ensure that the decision of every realized  panel is good for the whole population, as we see in the following example. 

\begin{example}\label{example:ex-post-distortion}
    Consider an instance with two alternatives, $a$ and $b$, such that  $k$ agents have distance equal to $0$ from  $a$ and distance equal to $1$ from  $b$, while the remaining $n-k$ agents have distance equal to $0$ from  $b$ and distance equal to $1$ from  $a$. With non-zero probability, under uniform selection we can see a panel $P$ that consists from the first $k$ agents, and then $c(P)=a$. Then,    $\soc(c(P))/\soc(c(N))\geq (n-k)/k$. As $k$ decreases this ratio goes to $n$. 
\end{example}

Under this observation, in this section,  we define the {\em ex-post} distortion of a selection algorithm $\A_k$ as the  worst-case ratio of the largest social cost of the alternatives selected by panels in the support of $\A_k$ and the minimum possible social cost, i.e.,
\begin{align*}
    \sup_d\sup_{P \in \supp(\A_{k})} \frac{ \soc(c(P))  }{\soc(c(N))}. 
\end{align*} 
The above example indicates that uniform selection  achieves  bad ex-post distortion. This problem arises because uniform selection does not guarantee that any realized panel is a good representative of the whole population. The key question here is whether we can guarantee that {\em each} selected panel achieves good ex-post distortion, without affecting the expected distortion.

We provide an affirmative answer to this question by revisiting Fair Greedy Capture (Algorithm \ref{alg:fgc}) introduced by \citet{EbMi23}. %
This algorithm  works as follows: It starts  with an empty panel $P$ and continuously grows a ball around every agent with rate $\delta$, until there is a ball that contains at least $\lceil \frac{n}{k}\rceil$ agents.\footnote{This can be equivalently described by calculating the distance of each agent from her $\lceil n/k \rceil$ closest agent, and pick the ball that is centered at the agent with the smallest such distance  and has a radius equal to this distance.}. Then,  one agent within this ball (if there are more than $\lceil n/k \rceil$ agents,  we arbitrarily exclude agents at the boundaries of the ball)  is selected uniformly at random and added to the panel and the rest of the $\lceil \frac{n}{k} \rceil$ agents are temporarily deleted. This process continues this way, until  less than $\lceil \frac{n}{k} \rceil$ agents are left. Then, at most one agent is selected among the agents that have not be disregarded  yet by picking each with probability $\lceil n/k \rceil$, and the last positions in the panel are filled by picking each agent that has not been selected so far with the same probability.\footnote{This description is slightly different than the one in the original paper. We can see that under this version again each agent is selected with probability equal to $k/n$, since before the last step of the algorithm each agent is selected with probability exactly equal to $1/\lceil n/k\rceil$, and at the last step the remaining probability is distributed equally among the agents that have not been selected yet.} %

\begin{algorithm}[tb]
   \caption{Fair Greedy Capture \citep{EbMi23}}
   \label{alg:fgc}
\begin{algorithmic}
   \STATE {\bfseries Input:}  $(N, d)$,  $k$
   \STATE {\bfseries Output:} Panel $P$
   \STATE $R \gets N, \delta \gets 0, P \gets \emptyset$;
   \WHILE{$\lvert R \rvert \ge \lceil \frac{n}{k} \rceil$}
   \STATE Smoothly increase $\delta$;
   \WHILE{there exists $i \in R$ such that $\lvert B(i, \delta) \rvert \ge \lceil \frac{n}{k} \rceil$}
   \STATE  $S \leftarrow \lceil n/k \rceil$ agents from $B(i, \delta)$ by breaking ties arbitrarily  among agents at the boundaries of the ball;
   \STATE Choose $j$ from $S$ uniformly at random;
   \STATE $P \gets  P \cup \{j\}$;
   \STATE $R \gets R\setminus S$;
   \ENDWHILE
   \ENDWHILE
   \IF{$\lvert P \rvert < k$}
   \STATE Sample at most one agent from $R$ by picking each agent in $R$ with probability equal to $1/\lceil n/k \rceil$;
   \STATE Sample the last $k - \lvert P \rvert$ members from $N\setminus P$ uniformly at random;
   \ENDIF
\end{algorithmic}
\end{algorithm}
Here, we show that any panel in the support of this algorithm achieves constant distortion when $k\geq 3$.
\begin{restatable}{theorem}{expostfgc}\label{thm:ex-post-FGC}
    For any $k \ge 3$, Fair Greedy Capture  guarantees ex-post distortion of at most $127$.
\end{restatable}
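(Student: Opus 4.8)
The plan is to show that for any panel $P$ in the support of Fair Greedy Capture and the alternative $a = c(P)$ it selects, the social cost $\soc(a)$ over the whole population is within a constant factor of $\soc(c(N))$. The key structural fact I would exploit is the one underlying the design of the algorithm: because the algorithm carves out disjoint balls each containing $\lceil n/k\rceil$ agents and selects one representative per ball, every agent in the population is ``close'' to some selected panel member in a controlled way. Concretely, each deleted cluster of $\lceil n/k\rceil$ agents has a representative on the panel at distance at most the cluster's radius, and — crucially — the radii grow monotonically as $\delta$ increases, so an agent deleted in an early, small-radius ball is near its representative. This captures-style guarantee is exactly what the Greedy Capture / core literature cited in the related-work section provides, and I would use it to bound $\soc(P, a)$ and the per-agent distances.

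\emph{The main obstacle} is that ex-post distortion compares $\soc(a) = \soc(P,a) + \soc(N\setminus P, a)$ to the \emph{global} optimum $\soc(c(N))$, not to the panel-optimal benchmark, so I cannot directly reuse the fair-selection bound of \Cref{thm:general_dist_metric} (which is an \emph{ex-ante} statement averaging over the random panel). Instead I would work pointwise. First I would bound $\soc(P,a) = \soc(P, c(P)) \le \soc(P, c(N))$ by optimality of $a$ for the panel. Then, to bound $\soc(N\setminus P, a)$, I would use the triangle inequality $d(i,a) \le d(i, r(i)) + d(r(i), a)$, where $r(i)$ is the panel representative of the cluster containing agent $i$; here $d(i,r(i))$ is controlled by the ball radius at the time $i$ was captured, and $d(r(i), a) = d(r(i), c(P))$ can be folded back into $\soc(P,a)$. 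The delicate part is relating the sum of cluster radii over all agents to $\soc(c(N))$: I would argue that the ball containing $c(N)$'s ``core'' neighbors forces the relevant radii to be small relative to the average distance from $c(N)$, i.e. a radius bound of the form $\delta_t \lesssim \soc(c(N))/n$ amortized over the captured agents.

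So the steps, in order, would be: (1) record the proportional-fairness / capture guarantee of Fair Greedy Capture, namely that for every agent $i$ there is a panel member $r(i)$ with $d(i, r(i))$ at most a constant times the distance from $i$ to its $\lceil n/k\rceil$-th nearest agent, and that these captured groups are disjoint of size $\lceil n/k\rceil$; (2) sum $d(i,r(i))$ over all agents and bound $\sum_i d(i, r(i))$ by a constant multiple of $\soc(c(N))$, using that $\soc(c(N)) = \sum_i d(i, c(N))$ together with the triangle inequality linking each agent's captured group to $c(N)$; (3) decompose $\soc(a) = \soc(P,a) + \soc(N\setminus P, a)$, bound $\soc(P,a) \le \soc(P, c(N))$ by optimality, and bound $\soc(N\setminus P,a)$ via $d(i,a)\le d(i,r(i)) + d(r(i),i') + d(i',a)$ for a suitable panel agent $i'$, chaining through the representative and then through the optimal-for-panel alternative; (4) assemble the constants, where the restriction $k\ge 3$ guarantees $\lceil n/k\rceil \ge 2$ so that every cluster genuinely contains a second agent certifying the radius bound, and track the arithmetic to land at the claimed constant $127$. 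The one step I expect to require real care is (2): converting the internal radius guarantee into a bound against the \emph{external} benchmark $\soc(c(N))$ without losing a dependence on $n$, which is where a counting argument over the disjoint $\lceil n/k\rceil$-sized clusters and a careful application of the triangle inequality through $c(N)$ will be needed.
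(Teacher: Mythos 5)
There is a genuine gap, and it sits exactly where you flagged uncertainty — but it is in step (3), not step (2). Your steps (1)–(2) are in fact salvageable (one can show every agent $i$ has \emph{some} panel member within $3r(i)$, where $r(i)$ is the $\lceil n/k\rceil$-th nearest-agent distance, and that $\sum_i r(i) = O(\soc(c(N)))$). The fatal move is ``$d(r(i),a)$ can be folded back into $\soc(P,a)$.'' Each panel member represents $\lceil n/k\rceil$ agents, so $\sum_{i\in N} d(r(i),a) = \sum_{z\in P} m_z\, d(z,a)$ with multiplicities $m_z$ as large as $\lceil n/k\rceil$ (or larger, since a stranded agent's nearest certified representative need not be its own cluster's). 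Panel optimality only certifies the \emph{unweighted} inequality $\soc(P,a)\le \soc(P,c(N))$ over the $k$ panel members, so your chain yields at best $\sum_i d(r(i),a)\le \lceil n/k\rceil\cdot \soc(P,c(N))$ — a factor $\Theta(n/k)$ too weak — and $\soc(P,c(N))$ itself need not be $O((k/n)\,\soc(c(N)))$, since one late ball can place a representative at distance comparable to $\soc(c(N))$ from $c(N)$. The paper's own $k=2$ lower-bound instance ($n-1$ agents at $0$, one agent at $1$, alternatives at $-\delta$ and $1$) makes this concrete: there the weighted sum is $\Theta(n)$ times the certified unweighted sum, and FGC genuinely has $\Omega(n)$ ex-post distortion. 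Crucially, your stated use of $k\ge 3$ (ensuring $\lceil n/k\rceil\ge 2$) already holds at $k=2$, so nothing in your sketch distinguishes the regime where the theorem is true from the regime where it is provably false — a sure sign the key mechanism is missing.

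The paper's proof avoids per-agent representative distances entirely and uses the capture property in the opposite direction. Set $D=d(c(P),c(N))$ and $L=\{i\in N: d(i,c(N))<D/12\}$. Two short triangle-inequality computations give $\soc(c(P))\le \bigl(13+12\tfrac{|L|}{n-|L|}\bigr)\soc(c(N))$, so everything reduces to showing $L$ cannot be too large. If $|L|\ge \ell\lceil n/k\rceil$, then the agents of $L$ are pairwise within $D/6$, so FGC must have formed at least $\ell$ balls of diameter less than $D/6$ inside $L$, forcing at least $\ell$ panel members within $D/4$ of $c(N)$; comparing $\soc(P,c(P))$ with $\soc(P,c(N))$ then gives $0>(3\ell/2-k)D$, i.e., $\ell<2k/3$ — the panel's choice of a far $a$ \emph{caps the number of agents near $c(N)$}, which is the mechanism your outline lacks. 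Combining $|L|<(\ell+1)\lceil n/k\rceil$ with $|L|\le n-k+\ell$, a case analysis (this is where $k\ge 3$ really enters, via Lemma~\ref{lem:19-over-21}) yields $|L|\le 19n/21$ and hence the constant $13+12\cdot\tfrac{19}{2}=127$. If you want to keep your framework, you would need to replace step (3) by an argument of this counting type; as written, the proposal cannot be assembled into a constant bound.
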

\begin{proof}
Let $P$ be any panel of size $k$ that has positive probability to be chosen by Fair Greedy Capture. Let $D= d(c(P),c(N))$ and denote by $L$ the set of agents that have distance less than $D/12$ from $c(N)$, i.e., $L:=\{i\in N\colon d(i,c(N)< D/12)\}$. 

Note that
\begin{align}
    &\soc(c(P))=\sum_{i\in N } d(i,c(P)) \nonumber\\
  &  \leq  \sum_{i\in N\setminus L} (d(i,c(N))+D) +\sum_{i\in L} (d(i,c(N))+D) \nonumber \\  
&\leq \sum_{i\in N\setminus L} 13\cdot d(i,c(N)) + \sum_{i\in L}d(i,c(N)) + \lvert L\rvert \cdot D \nonumber  \\ 
&\leq  13\cdot SC(c(N))+\lvert L\rvert \cdot D. \label{ineq:ex-post-1}
\end{align}  
The first inequality follows from the triangle inequality. The second one is due to the fact that for each $i\in N\setminus L$, it holds that $d(i,c(N)) \geq D/12$.  From the same fact, we also get that 
\begin{align}
    \soc(c(N))\geq\sum_{i\in N\setminus L} d(i,c(N))\geq (n-|L|)\cdot D/12.  \label{ineq:ex-post-2}
\end{align}
Therefore, from \Cref{ineq:ex-post-1} and \Cref{ineq:ex-post-2}, we have 
\begin{align}
    \soc(c(P))\leq  \left(13+12\frac{|L|}{n-|L|}\right)\cdot \soc(c(N)). \label{ineq:ex-post-3}
\end{align}
Next, we show that  $|L|$ cannot be significantly large when $c(P)\neq c(N)$.

Suppose that $\lvert L\rvert$ satisfies $\ell\cdot \lceil n/k\rceil \leq \ell < (\ell+1) \cdot \lceil n/k\rceil)$ for some integer $\ell$ with $0\leq \ell\leq k$.  Note that since $|L|\geq \ell \cdot  \lceil n/k\rceil$, we can partition $L$ into at least $\ell$ disjoint parts such that each of them contains at least $\lceil n/k\rceil$ agents with distance less than $D/6$ from each other (since each $i\in L$ has distance less than $D/12$ from $c(N)$).  This means that during the execution of Fair Greedy Capture, at least $\ell$ balls captured individuals from $L$ and each such ball has diameter less than $D/6$.  Therefore, there are at least $\ell$ representatives in $P$ that have distance less than $D/4$ from $c(N)$ (since each such representative has distance less than $D/6$ from an agent in $L$ and each agent in $L$ has distance less than $D/12$ from $c(N)$). 
Thus, for the subset $P'\coloneqq\{i\in P\colon d(i,c(N))< D/4\}$ it holds that $|P'|\geq \ell$ (and therefore $\lvert P\setminus P'\rvert \le k-\ell$). In other words, each panel contains at least $\ell$ representatives from $L$.

Since $c(P)$ is the alternative selected by panel $P$, we have
\begin{align*}
    0 &\geq \soc(P,c(P))-\soc(P,c(N))\\
    &=\sum_{i\in P'}{(d(i,c(P))-d(i,c(N)))}\\
    &\quad\quad+\sum_{i\in P\setminus P'}{(d(i,c(P))-d(i,c(N)))}\\
    &\geq \sum_{i\in P'}{(D-2d(i,c(N)))}-\sum_{i\in P\setminus P'}{D}\\
    &> \ell\cdot D/2-(k-\ell)\cdot D = (3\ell/2-k)\cdot D.
\end{align*}
The second inequality follows by the triangle inequality using $d(i,c(P))\geq D-d(i,c(N)$ for $i\in P'$ and $d(i,c(P))\geq d(i,c(N))-D$ for $i\in P\setminus P'$. The third inequality follows by the definition of set $P'$. We conclude that $\ell<2k/3$, which can be expressed as $\ell\leq \left\lceil \frac{2k}{3}\right\rceil-1$ since $k$ and $\ell$ are integers.

By the definition of $\ell$, we have that $|L| < (\ell+1)\cdot \lceil n/k\rceil$, implying that
\begin{align}\label{eq:L1}
    |L| &\leq \left\lceil\frac{2k}{3}\right\rceil\cdot \left\lceil \frac{n}{k}\right\rceil-1.
\end{align}
Furthermore, panel has $k-\ell$ agents from set $N\setminus L$. Thus, 
\begin{align}\label{eq:L2}
    |L| &\leq n-k+\ell\leq n-k+\left\lceil \frac{2k}{3}\right\rceil-1.
\end{align}
We will use equations (\ref{eq:L1}) and (\ref{eq:L2}) to show that $|L|\leq 19n/21$; this is done in the next lemma.

\begin{lemma}\label{lem:19-over-21}
\begin{align*}
    \min\left\{\frac{\left\lceil \frac{2k}{3} \right\rceil \cdot \left\lceil \frac{n}{k}\right\rceil-1}{n}, 1-\frac{k-\left\lceil\frac{2k}{3}\right\rceil+1}{n}\right\} \leq 19/21.
\end{align*}    
\end{lemma}

\begin{proof}
Notice that if $n\leq 9k-9\left\lceil \frac{2k}{3}\right\rceil+9$, then the second term in the $\min$-expression above is at most $8/9$ and the claim is clearly true. 

In the following, we assume that $n\geq 9k-9\left\lceil \frac{2k}{3}\right\rceil+9$. Notice that the first term in the $\min$-expression is equal to $\frac{\left\lceil\frac{2k}{3}\right\rceil \cdot t-1}{n}$ for $n=(t-1)k+1, (t-1)k+2, \dots, tk$ and integer $t\geq 1$, and is thus decreasing for values of $n$ in that range. We will show that the value $\frac{\left\lceil\frac{2k}{3}\right\rceil\cdot t-1}{(t-1)k+1}$ of the first term for $n=(t-1)k+1$ is at most $19/21$, for all possible integers $t\geq 1$ and $k\geq 3$.

Actually, the assumption $n\geq 9k-9\left\lceil \frac{2k}{3}\right\rceil+9$ implies that we have to consider values of $t$ satisfying $(t-1)k+1\geq 9k-9\left\lceil \frac{2k}{3}\right\rceil+9$, which yields $t\geq 10-\frac{9}{k}\cdot \left\lceil \frac{2k}{3}\right\rceil+\frac{8}{k}$. We distinguish between four cases:

If $\left\lceil\frac{2k}{3}\right\rceil= \frac{2k}{3}$, then $t\geq 4+\frac{8}{k}$, i.e., $t\geq 5$. Thus, $\frac{\left\lceil \frac{2k}{3}\right\rceil\cdot t-1}{(t-1)k+1}\leq \frac{\frac{2k}{3}\cdot 5-1}{4k+1}<5/6<19/21$.

If $\left\lceil\frac{2k}{3}\right\rceil= \frac{2k+1}{3}$ and $k\geq 7$, then
$t\geq 4+\frac{5}{k}$, i.e., $t\geq 5$. Thus, $\frac{\left\lceil \frac{2k}{3}\right\rceil\cdot t-1}{(t-1)k+1}\leq \frac{\frac{2k+1}{3}\cdot 5-1}{4k+1}<5/6<19/21$.

If $\left\lceil\frac{2k}{3}\right\rceil =\frac{2k+2}{3}$, then $k\geq 5$ and $t\geq 4+\frac{2}{k}$, i.e., $t\geq 5$. Thus, $\frac{\left\lceil \frac{2k}{3}\right\rceil\cdot t-1}{(t-1)k+1}\leq \frac{\frac{2k+2}{3}\cdot 5-1}{4k+1}=\frac{5}{6}+\frac{3}{8k+2}\leq 19/21$.

If $k=4$ then $t\geq 6$ and $\frac{\left\lceil \frac{2k}{3}\right\rceil\cdot t-1}{(t-1)k+1}\leq \frac{\left\lceil\frac{8}{3}\right\rceil\cdot 6-1}{5\cdot 4+1}=17/21<19/21$.
\end{proof}
Using \Cref{lem:19-over-21}, \Cref{ineq:ex-post-3} now yields
\begin{align*}
    \soc(c(P)) &\leq \left(13+12\cdot \frac{19/21}{1-19/21}\right)\cdot \soc(c(N))\\
    &=127\cdot \soc(c(N)),
\end{align*}
as desired.
\end{proof}

Further, we show that for $k \leq 2$ there exists an instance, in which Fair Greedy Capture can have linear ex-post distortion in terms of the size of the population.
\begin{restatable}{proposition}{omegan}
     For $k \leq 2$ there exists an instance in which Fair Greedy Capture has $\Omega(n)$ ex-post distortion. 
\end{restatable}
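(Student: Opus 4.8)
The plan is to exhibit, for every even $n$, a simple two-alternative instance on which, for $k=2$, Fair Greedy Capture selects with positive probability a panel whose best alternative is $\Omega(n)$ times worse than the optimum. I would place $n-2$ agents at a single point close to one alternative $b$ and the remaining two agents, $x$ and $y$, at the other alternative $a$, with $d(a,b)=D$; to realize this as a metric I would embed the points in the Euclidean plane, putting the $b$-cluster at $(0,\eta)$ for a tiny $\eta>0$, the alternative $b$ at the origin, and $a=x=y=(D,0)$. Then $\soc(b)=(n-2)\eta+2D$ and $\soc(a)=(n-2)\sqrt{D^2+\eta^2}$, so for small $\eta$ the optimal alternative is $c(N)=b$ with $\soc(c(N))=\Theta(D)$, while $\soc(a)=\Theta(nD)$.

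The crux is to show that the panel $\{\beta,x\}$, with $\beta$ a $b$-cluster agent, lies in the support of Fair Greedy Capture. Since $\lceil n/2\rceil=n/2$ and the $n-2\ge n/2$ coincident cluster agents already form a ball of the required size at radius $\delta=0$, the first captured ball is centered at the cluster, so its representative $\beta$ sits at $(0,\eta)$ and $n/2$ cluster agents are deleted. After this step $R$ consists of the $n/2-2$ surviving cluster agents together with $x$ and $y$, so $|R|=n/2=\lceil n/2\rceil$ and the outer loop runs once more. Now no ball reaches size $n/2$ until $\delta$ grows to $\sqrt{D^2+\eta^2}$, at which point the ball around the cluster (equivalently around $x$) captures all of $R$; the algorithm then picks its single representative uniformly among these $n/2$ agents, so $x$ or $y$ is chosen with probability $4/n>0$. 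Hence a panel containing an $a$-agent lies in $\supp(\A_2)$.

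It remains to verify that such a panel's best alternative is $a$. I would compute $\soc(\{\beta,x\},a)=\sqrt{D^2+\eta^2}$ and $\soc(\{\beta,x\},b)=\eta+D$; squaring shows $\sqrt{D^2+\eta^2}<\eta+D$ for every $\eta>0$, so $a$ is the strict minimizer and no tie-breaking assumption is needed. Therefore this realized panel $P$ has $\soc(c(P))=\soc(a)=\Theta(nD)$ against $\soc(c(N))=\Theta(D)$, giving ex-post distortion $\Omega(n)$. The case $k=1$ is easier and I would dispatch it first: there $\lceil n/k\rceil=n$, the algorithm grows a single ball to size $n$ and outputs one agent chosen uniformly at random (i.e., random dictatorship), so with probability $1/n$ it returns the panel $\{x\}$, whose best alternative is $a$, again yielding distortion $(n-2)/2=\Omega(n)$. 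Odd $n$ is handled identically with $\lceil n/2\rceil=(n+1)/2$ and one fewer cluster agent.

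The main obstacle, and the point I would emphasize, is the second-ball argument: the bad agents $x,y$ are far from the dense cluster and are never the center of a small ball, so naively one might expect them never to be selected. The construction exploits the fact that Fair Greedy Capture chooses its representative uniformly among all agents captured by a ball — including those at its boundary — once the ball has been forced to grow large enough to reach size $\lceil n/2\rceil$; this is exactly what lets a panel containing an $a$-agent survive into the support. The tiny perpendicular offset $\eta$ serves only to turn the otherwise-tied comparison $\soc(P,a)=\soc(P,b)$ into a strict inequality, making the lower bound robust to the arbitrary tie-breaking in the definition of $c(P)$.
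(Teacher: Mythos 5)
Your proposal is correct and takes essentially the same route as the paper's proof: a dense cluster of agents near the optimal alternative plus a far agent at the bad alternative, a support argument showing the second captured ball must grow to swallow all remaining agents and can therefore pick a far agent as representative, and a tiny perturbation (your perpendicular offset $\eta$, the paper's $-\delta$ shift of the good alternative) making the panel's preference for the bad alternative strict; your $k=1$ argument via random dictatorship likewise matches the paper's reduction to uniform selection. One peripheral slip worth noting: for odd $n$ the second capture loop never fires (after removing $\lceil n/2\rceil$ agents only $\lceil n/2\rceil - 1$ remain, below the threshold), so the far agent would enter the panel through the algorithm's final sampling branch rather than a second ball — but your even-$n$ family already suffices for the existence claim, so this does not affect correctness.
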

\begin{proof}
    When $k=1$, Fair Greedy Capture operates as uniform selection, and therefore the statement follows  by~\Cref{example:ex-post-distortion}.
    
    For $k=2$, consider $n$ agents on the line, with one agent located at position $1$ and $n-1$ agents located at position $0$. Fair Greedy Capture could select a panel consisting of one agent on $0$ and the agent on $1$. Now consider an instance with two alternatives, one at location $-\delta$ (for $\delta\in (0,1)$) and one at $1$. 
     The panel would choose the alternative at $1$. However, the ex-post distortion of this alternative is $\frac{n-1}{1+n\cdot \delta}$, which approaches $n-1$ as $\delta$ approaches $0$.
\end{proof}
Now we exploit the fact that the outcomes from Fair Greedy Capture are at most a constant factor away from the social cost of the optimum alternative, together with some structural properties to show that $k = \mathcal{O}\left(\varepsilon^{-3}\ln{\frac{m}{\varepsilon}}\right)$ is sufficient to reach a distortion of $1 + \varepsilon$.

\iftrue
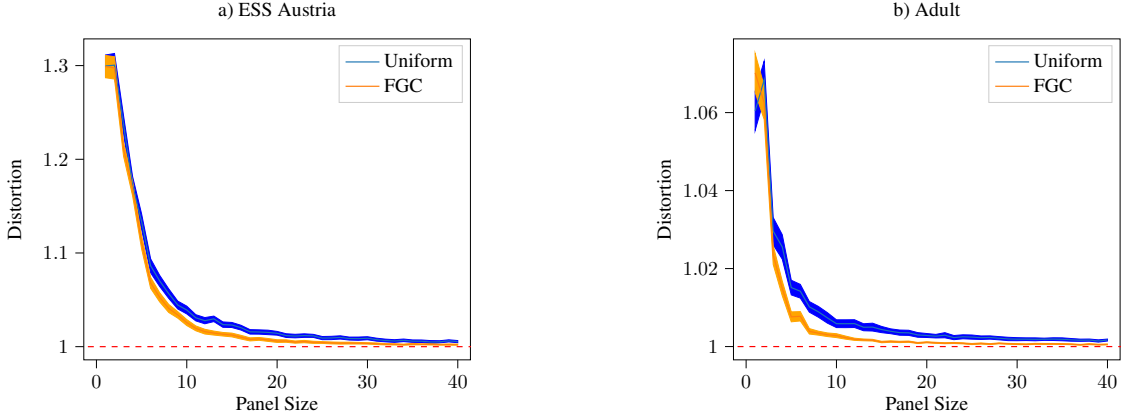
\begin{figure*}[t!]
\begin{subfigure}{0.5\textwidth}
\centering
  \begin{tikzpicture}[scale = 0.75]

\definecolor{darkgray176}{RGB}{176,176,176}
\definecolor{darkorange25512714}{RGB}{255,127,14}
\definecolor{lightgray204}{RGB}{204,204,204}
\definecolor{orange}{RGB}{255,165,0}
\definecolor{steelblue31119180}{RGB}{31,119,180}

\begin{axis}[
legend cell align={left},
legend style={fill opacity=0.8, draw opacity=1, text opacity=1, draw=lightgray204},
tick align=outside,
tick pos=left,
title={a) ESS Austria},
x grid style={darkgray176},
xlabel={Panel Size},
xmin=-1.4, xmax=41.4,
xtick style={color=black},
y grid style={darkgray176},
ylabel={Distortion},
ymin=0.985930834007133, ymax=1.32850522005963,
ytick style={color=black}
]
\path [draw=blue, fill=blue, opacity=0.15]
(axis cs:1,1.28783419982522)
--(axis cs:1,1.31131539417918)
--(axis cs:2,1.31293365705724)
--(axis cs:3,1.24655154411007)
--(axis cs:4,1.18173765353697)
--(axis cs:5,1.1416558923108)
--(axis cs:6,1.09371395933244)
--(axis cs:7,1.0761405894599)
--(axis cs:8,1.06173567911261)
--(axis cs:9,1.04865760454883)
--(axis cs:10,1.04327782085851)
--(axis cs:11,1.03415334611072)
--(axis cs:12,1.03106124594467)
--(axis cs:13,1.03263797791403)
--(axis cs:14,1.02641437476251)
--(axis cs:15,1.02583044715865)
--(axis cs:16,1.02285185252099)
--(axis cs:17,1.01819795721339)
--(axis cs:18,1.01795198287849)
--(axis cs:19,1.01760214033389)
--(axis cs:20,1.01628466902289)
--(axis cs:21,1.01380316905107)
--(axis cs:22,1.01320051395914)
--(axis cs:23,1.0137263603832)
--(axis cs:24,1.01316950865734)
--(axis cs:25,1.01084707800286)
--(axis cs:26,1.01104214195654)
--(axis cs:27,1.01161457102236)
--(axis cs:28,1.01022705317459)
--(axis cs:29,1.01026005273353)
--(axis cs:30,1.01063848470392)
--(axis cs:31,1.00907825245733)
--(axis cs:32,1.00804039787255)
--(axis cs:33,1.00744158439492)
--(axis cs:34,1.00816106173571)
--(axis cs:35,1.0071749125053)
--(axis cs:36,1.00713474374014)
--(axis cs:37,1.00649953277532)
--(axis cs:38,1.00634142475975)
--(axis cs:39,1.00733899680323)
--(axis cs:40,1.0065277757964)
--(axis cs:40,1.0042355215244)
--(axis cs:40,1.0042355215244)
--(axis cs:39,1.00504036820517)
--(axis cs:38,1.00404288804185)
--(axis cs:37,1.00347218858748)
--(axis cs:36,1.00447510010986)
--(axis cs:35,1.0045708424339)
--(axis cs:34,1.00537924118829)
--(axis cs:33,1.00456443637388)
--(axis cs:32,1.00509940303145)
--(axis cs:31,1.00588944037467)
--(axis cs:30,1.00735581494888)
--(axis cs:29,1.00678225766647)
--(axis cs:28,1.00704346518061)
--(axis cs:27,1.00810431887724)
--(axis cs:26,1.00751618651106)
--(axis cs:25,1.00745428764514)
--(axis cs:24,1.00989143016946)
--(axis cs:23,1.0099216669184)
--(axis cs:22,1.00941801937966)
--(axis cs:21,1.00981892213654)
--(axis cs:20,1.01229287800551)
--(axis cs:19,1.01282197613971)
--(axis cs:18,1.01336624426271)
--(axis cs:17,1.01342701851342)
--(axis cs:16,1.01773081404541)
--(axis cs:15,1.02041143763575)
--(axis cs:14,1.02064783971469)
--(axis cs:13,1.02674430576437)
--(axis cs:12,1.02442655827373)
--(axis cs:11,1.02749140482088)
--(axis cs:10,1.03495299728949)
--(axis cs:9,1.04023550597317)
--(axis cs:8,1.05155100777739)
--(axis cs:7,1.0642579590521)
--(axis cs:6,1.07876894847916)
--(axis cs:5,1.1251799596432)
--(axis cs:4,1.16355337237624)
--(axis cs:3,1.22749442548552)
--(axis cs:2,1.28796805295996)
--(axis cs:1,1.28783419982522)
--cycle;

\path [draw=orange, fill=orange, opacity=0.3]
(axis cs:1,1.28702299626911)
--(axis cs:1,1.31084585664729)
--(axis cs:2,1.30945444767946)
--(axis cs:3,1.2182469230074)
--(axis cs:4,1.17673555041767)
--(axis cs:5,1.11846416009636)
--(axis cs:6,1.07398947722634)
--(axis cs:7,1.05760836153482)
--(axis cs:8,1.04510773430989)
--(axis cs:9,1.03738354618484)
--(axis cs:10,1.02889445063884)
--(axis cs:11,1.02211301133895)
--(axis cs:12,1.01856829282434)
--(axis cs:13,1.01663935193287)
--(axis cs:14,1.01517244829502)
--(axis cs:15,1.01446323883157)
--(axis cs:16,1.01210232865545)
--(axis cs:17,1.00944045870358)
--(axis cs:18,1.00983147661086)
--(axis cs:19,1.0087303381383)
--(axis cs:20,1.00739143067437)
--(axis cs:21,1.00729765227112)
--(axis cs:22,1.00606954927345)
--(axis cs:23,1.00669558681949)
--(axis cs:24,1.00580852993984)
--(axis cs:25,1.005407985472)
--(axis cs:26,1.00502279737218)
--(axis cs:27,1.00476009350503)
--(axis cs:28,1.00479081372098)
--(axis cs:29,1.00453181306956)
--(axis cs:30,1.00422692918008)
--(axis cs:31,1.00399956115083)
--(axis cs:32,1.00326976480078)
--(axis cs:33,1.00315189485684)
--(axis cs:34,1.00274621375959)
--(axis cs:35,1.00348713556216)
--(axis cs:36,1.00301325004479)
--(axis cs:37,1.0030520367898)
--(axis cs:38,1.00271463335962)
--(axis cs:39,1.00290554099789)
--(axis cs:40,1.00275964800842)
--(axis cs:40,1.00159520105118)
--(axis cs:40,1.00159520105118)
--(axis cs:39,1.00161735822691)
--(axis cs:38,1.00167178429918)
--(axis cs:37,1.0017969691554)
--(axis cs:36,1.00196802972001)
--(axis cs:35,1.00185450222464)
--(axis cs:34,1.00160543619241)
--(axis cs:33,1.00178315623236)
--(axis cs:32,1.00187163929962)
--(axis cs:31,1.00229107977116)
--(axis cs:30,1.00274971785872)
--(axis cs:29,1.00283439701164)
--(axis cs:28,1.00282342875822)
--(axis cs:27,1.00274443528257)
--(axis cs:26,1.00296050705742)
--(axis cs:25,1.0034580541148)
--(axis cs:24,1.00349426456216)
--(axis cs:23,1.00419033159291)
--(axis cs:22,1.00369877806655)
--(axis cs:21,1.00467679221568)
--(axis cs:20,1.00450097773483)
--(axis cs:19,1.0056147487833)
--(axis cs:18,1.00677803551554)
--(axis cs:17,1.00611852222402)
--(axis cs:16,1.00816257018695)
--(axis cs:15,1.01043709623883)
--(axis cs:14,1.01136982707538)
--(axis cs:13,1.01237508121753)
--(axis cs:12,1.01350790261806)
--(axis cs:11,1.01695467602945)
--(axis cs:10,1.02252629683796)
--(axis cs:9,1.03046966317636)
--(axis cs:8,1.03649158441491)
--(axis cs:7,1.04780039318598)
--(axis cs:6,1.06225486157086)
--(axis cs:5,1.10436769042724)
--(axis cs:4,1.16109817886673)
--(axis cs:3,1.2032909317666)
--(axis cs:2,1.28564467298974)
--(axis cs:1,1.28702299626911)
--cycle;

\addplot [semithick, steelblue31119180]
table {%
1 1.2995747970022
2 1.3004508550086
3 1.2370229847978
4 1.1726455129566
5 1.133417925977
6 1.0862414539058
7 1.070199274256
8 1.056643343445
9 1.044446555261
10 1.039115409074
11 1.0308223754658
12 1.0277439021092
13 1.0296911418392
14 1.0235311072386
15 1.0231209423972
16 1.0202913332832
17 1.0158124878634
18 1.0156591135706
19 1.0152120582368
20 1.0142887735142
21 1.0118110455938
22 1.0113092666694
23 1.0118240136508
24 1.0115304694134
25 1.009150682824
26 1.0092791642338
27 1.0098594449498
28 1.0086352591776
29 1.0085211552
30 1.0089971498264
31 1.007483846416
32 1.006569900452
33 1.0060030103844
34 1.006770151462
35 1.0058728774696
36 1.005804921925
37 1.0049858606814
38 1.0051921564008
39 1.0061896825042
40 1.0053816486604
};
\addlegendentry{Uniform}
\addplot [semithick, darkorange25512714]
table {%
1 1.2989344264582
2 1.2975495603346
3 1.210768927387
4 1.1689168646422
5 1.1114159252618
6 1.0681221693986
7 1.0527043773604
8 1.0407996593624
9 1.0339266046806
10 1.0257103737384
11 1.0195338436842
12 1.0160380977212
13 1.0145072165752
14 1.0132711376852
15 1.0124501675352
16 1.0101324494212
17 1.0077794904638
18 1.0083047560632
19 1.0071725434608
20 1.0059462042046
21 1.0059872222434
22 1.00488416367
23 1.0054429592062
24 1.004651397251
25 1.0044330197934
26 1.0039916522148
27 1.0037522643938
28 1.0038071212396
29 1.0036831050406
30 1.0034883235194
31 1.003145320461
32 1.0025707020502
33 1.0024675255446
34 1.002175824976
35 1.0026708188934
36 1.0024906398824
37 1.0024245029726
38 1.0021932088294
39 1.0022614496124
40 1.0021774245298
};
\addlegendentry{FGC}
\addplot [semithick, red, opacity=0.1, dashed, forget plot]
table {%
-0.95 1
41.95 1
};
\end{axis}

\end{tikzpicture}
  \end{subfigure}
  \begin{subfigure}{0.5\textwidth}
  \centering
    \begin{tikzpicture}[scale = 0.75]

\definecolor{darkgray176}{RGB}{176,176,176}
\definecolor{darkorange25512714}{RGB}{255,127,14}
\definecolor{lightgray204}{RGB}{204,204,204}
\definecolor{orange}{RGB}{255,165,0}
\definecolor{steelblue31119180}{RGB}{31,119,180}

\begin{axis}[
legend cell align={left},
legend style={fill opacity=0.8, draw opacity=1, text opacity=1, draw=lightgray204},
tick align=outside,
tick pos=left,
title={b) Adult},
x grid style={darkgray176},
xlabel={Panel Size},
xmin=-1.4, xmax=41.4,
xtick style={color=black},
y grid style={darkgray176},
ylabel={Distortion},
ymin=0.996596234850859, ymax=1.07890467287251,
ytick style={color=black}
]
\path [draw=blue, fill=blue, opacity=0.15]
(axis cs:1,1.05543273086625)
--(axis cs:1,1.06543666260055)
--(axis cs:2,1.07321867610959)
--(axis cs:3,1.03308721982606)
--(axis cs:4,1.02869929287312)
--(axis cs:5,1.01700659844303)
--(axis cs:6,1.01595008166947)
--(axis cs:7,1.01154429337105)
--(axis cs:8,1.01029540792857)
--(axis cs:9,1.00869745437122)
--(axis cs:10,1.00686439640338)
--(axis cs:11,1.00686476704018)
--(axis cs:12,1.0068873479431)
--(axis cs:13,1.00585157136645)
--(axis cs:14,1.00600067486182)
--(axis cs:15,1.00516391833365)
--(axis cs:16,1.00455016213375)
--(axis cs:17,1.00424185672541)
--(axis cs:18,1.00418685397856)
--(axis cs:19,1.00357863282499)
--(axis cs:20,1.00334770676894)
--(axis cs:21,1.0031228158674)
--(axis cs:22,1.00357290738601)
--(axis cs:23,1.00283272311303)
--(axis cs:24,1.00302224161288)
--(axis cs:25,1.00296964729653)
--(axis cs:26,1.00272444494729)
--(axis cs:27,1.00274793313915)
--(axis cs:28,1.00255711306442)
--(axis cs:29,1.00238279134046)
--(axis cs:30,1.00228469174575)
--(axis cs:31,1.0022399383241)
--(axis cs:32,1.00223595495493)
--(axis cs:33,1.00216096355449)
--(axis cs:34,1.00225806045856)
--(axis cs:35,1.00223829789934)
--(axis cs:36,1.00206381534062)
--(axis cs:37,1.00190975992313)
--(axis cs:38,1.00193631937817)
--(axis cs:39,1.00173008712367)
--(axis cs:40,1.0019257503362)
--(axis cs:40,1.0014250204018)
--(axis cs:40,1.0014250204018)
--(axis cs:39,1.00114339897353)
--(axis cs:38,1.00138068577423)
--(axis cs:37,1.00133943428527)
--(axis cs:36,1.00137531824498)
--(axis cs:35,1.00148582983266)
--(axis cs:34,1.00155784451183)
--(axis cs:33,1.00158419908791)
--(axis cs:32,1.00148178151467)
--(axis cs:31,1.0015441346527)
--(axis cs:30,1.00150562381985)
--(axis cs:29,1.00152062179514)
--(axis cs:28,1.00172516738758)
--(axis cs:27,1.00190176925525)
--(axis cs:26,1.00182048275871)
--(axis cs:25,1.00189495285427)
--(axis cs:24,1.00199815340033)
--(axis cs:23,1.00176315521737)
--(axis cs:22,1.00225258865439)
--(axis cs:21,1.0022121876326)
--(axis cs:20,1.00232871972906)
--(axis cs:19,1.00238949668261)
--(axis cs:18,1.00286329270224)
--(axis cs:17,1.00290794357979)
--(axis cs:16,1.00329904504985)
--(axis cs:15,1.00357318680794)
--(axis cs:14,1.00398422874258)
--(axis cs:13,1.00405855758875)
--(axis cs:12,1.0048855622233)
--(axis cs:11,1.00485085826822)
--(axis cs:10,1.00486844579262)
--(axis cs:9,1.00600349907558)
--(axis cs:8,1.00752266072423)
--(axis cs:7,1.00882036243815)
--(axis cs:6,1.01243055396813)
--(axis cs:5,1.01325090473457)
--(axis cs:4,1.02242637935208)
--(axis cs:3,1.02609171221754)
--(axis cs:2,1.06362050909521)
--(axis cs:1,1.05543273086625)
--cycle;

\path [draw=orange, fill=orange, opacity=0.3]
(axis cs:1,1.06521240171563)
--(axis cs:1,1.07516338023517)
--(axis cs:2,1.06681603470756)
--(axis cs:3,1.02590870998739)
--(axis cs:4,1.01678175192453)
--(axis cs:5,1.0088940701874)
--(axis cs:6,1.00896731818416)
--(axis cs:7,1.00461605924825)
--(axis cs:8,1.00394800823001)
--(axis cs:9,1.00349988535287)
--(axis cs:10,1.00330049329963)
--(axis cs:11,1.00270065005403)
--(axis cs:12,1.00215191028456)
--(axis cs:13,1.00194314873089)
--(axis cs:14,1.00185835778424)
--(axis cs:15,1.00135564796798)
--(axis cs:16,1.00147828802037)
--(axis cs:17,1.00136227853288)
--(axis cs:18,1.00142467539044)
--(axis cs:19,1.00105788378765)
--(axis cs:20,1.00123361450885)
--(axis cs:21,1.00109969304547)
--(axis cs:22,1.00098919966903)
--(axis cs:23,1.0010594072979)
--(axis cs:24,1.00096204937287)
--(axis cs:25,1.00078986210645)
--(axis cs:26,1.00094059590561)
--(axis cs:27,1.0007567923668)
--(axis cs:28,1.00094691757433)
--(axis cs:29,1.00081879145639)
--(axis cs:30,1.00076338698044)
--(axis cs:31,1.00068652448194)
--(axis cs:32,1.00083320988873)
--(axis cs:33,1.0007947069998)
--(axis cs:34,1.0007955593996)
--(axis cs:35,1.00075793348057)
--(axis cs:36,1.00069815943524)
--(axis cs:37,1.00056517476554)
--(axis cs:38,1.00073813739461)
--(axis cs:39,1.00060923199248)
--(axis cs:40,1.00066235909029)
--(axis cs:40,1.00044128290011)
--(axis cs:40,1.00044128290011)
--(axis cs:39,1.00038952081472)
--(axis cs:38,1.00049313883099)
--(axis cs:37,1.00035656790406)
--(axis cs:36,1.00047447293236)
--(axis cs:35,1.00054228681943)
--(axis cs:34,1.0005673331788)
--(axis cs:33,1.0005530859362)
--(axis cs:32,1.00057371155767)
--(axis cs:31,1.00046764682446)
--(axis cs:30,1.00051818778636)
--(axis cs:29,1.00059837120721)
--(axis cs:28,1.00070308662767)
--(axis cs:27,1.0005144286292)
--(axis cs:26,1.00063566781079)
--(axis cs:25,1.00051136661595)
--(axis cs:24,1.00069002637793)
--(axis cs:23,1.0007796535769)
--(axis cs:22,1.00073828274817)
--(axis cs:21,1.00082337897093)
--(axis cs:20,1.00095569438515)
--(axis cs:19,1.00078355751675)
--(axis cs:18,1.00108988985356)
--(axis cs:17,1.00101783290552)
--(axis cs:16,1.00110555173363)
--(axis cs:15,1.00097878140322)
--(axis cs:14,1.00142505357376)
--(axis cs:13,1.00148473688951)
--(axis cs:12,1.00158400384824)
--(axis cs:11,1.00190100899117)
--(axis cs:10,1.00230326974237)
--(axis cs:9,1.00256581647153)
--(axis cs:8,1.00292988624959)
--(axis cs:7,1.00321779937295)
--(axis cs:6,1.00656745138944)
--(axis cs:5,1.006345262109)
--(axis cs:4,1.01303778628467)
--(axis cs:3,1.02103677953301)
--(axis cs:2,1.05823132080404)
--(axis cs:1,1.06521240171563)
--cycle;

\addplot [semithick, steelblue31119180]
table {%
1 1.0604346967334
2 1.0684195926024
3 1.0295894660218
4 1.0255628361126
5 1.0151287515888
6 1.0141903178188
7 1.0101823279046
8 1.0089090343264
9 1.0073504767234
10 1.005866421098
11 1.0058578126542
12 1.0058864550832
13 1.0049550644776
14 1.0049924518022
15 1.0043685525708
16 1.0039246035918
17 1.0035749001526
18 1.0035250733404
19 1.0029840647538
20 1.002838213249
21 1.00266750175
22 1.0029127480202
23 1.0022979391652
24 1.0025101975066
25 1.0024323000754
26 1.002272463853
27 1.0023248511972
28 1.002141140226
29 1.0019517065678
30 1.0018951577828
31 1.0018920364884
32 1.0018588682348
33 1.0018725813212
34 1.0019079524852
35 1.001862063866
36 1.0017195667928
37 1.0016245971042
38 1.0016585025762
39 1.0014367430486
40 1.001675385369
};
\addlegendentry{Uniform}
\addplot [semithick, darkorange25512714]
table {%
1 1.0701878909754
2 1.0625236777558
3 1.0234727447602
4 1.0149097691046
5 1.0076196661482
6 1.0077673847868
7 1.0039169293106
8 1.0034389472398
9 1.0030328509122
10 1.002801881521
11 1.0023008295226
12 1.0018679570664
13 1.0017139428102
14 1.001641705679
15 1.0011672146856
16 1.001291919877
17 1.0011900557192
18 1.001257282622
19 1.0009207206522
20 1.001094654447
21 1.0009615360082
22 1.0008637412086
23 1.0009195304374
24 1.0008260378754
25 1.0006506143612
26 1.0007881318582
27 1.000635610498
28 1.000825002101
29 1.0007085813318
30 1.0006407873834
31 1.0005770856532
32 1.0007034607232
33 1.000673896468
34 1.0006814462892
35 1.00065011015
36 1.0005863161838
37 1.0004608713348
38 1.0006156381128
39 1.0004993764036
40 1.0005518209952
};
\addlegendentry{FGC}
\addplot [semithick, red, opacity=0.1, dashed, forget plot]
table {%
-0.95 1
41.95 1
};
\end{axis}

\end{tikzpicture}
   \end{subfigure}
    \caption{Average distortion for $k$ from $1$ to $40$. The figures include the 95\% confidence interval.}
    \label{fig:exante}
\end{figure*}

\begin{figure*}[t!]
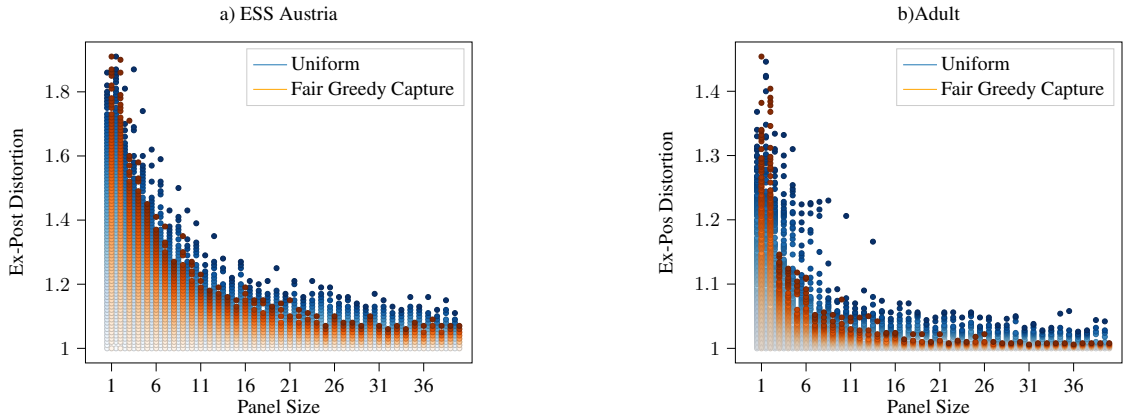

\begin{subfigure}{0.5\textwidth}
\centering
\include{figures/ATscatter}
\end{subfigure}
\begin{subfigure}{0.5\textwidth}
\centering
\include{figures/adultscatter}
\end{subfigure}
    \caption{Distribution of ex-post distortion values for $k$ from $1$ to $49$.}
    \label{fig:expost}
\end{figure*}
\fi

\begin{restatable}{theorem}{exantefgc}\label{thm:ex-ante-FGC}
    For every $\varepsilon>0$ and integer $m>0$, the exists integer $k\in \mathcal{O}\left(\varepsilon^{-3}\ln{\frac{m}{\varepsilon}}\right)$, so that selecting a $k$-sized panel of agents with Fair Greedy Capture, has ex-ante distortion at most $1+\varepsilon$, for every set of $m$ alternatives.
\end{restatable}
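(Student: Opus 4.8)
The plan is to mirror the three-regime analysis in the proof of \Cref{main_theorem}, but to exploit the constant-factor ex-post guarantee of \Cref{thm:ex-post-FGC} to entirely remove its most delicate regime. Write $c'=c(N)$ and let $W$ be the set of alternatives with $\soc(a)\le(1+\varepsilon/2)\soc(c')$. As in \eqref{eq:break-distortion-bound}, it suffices to show that every $a\in C\setminus W$ is selected with probability at most $\frac{\varepsilon\cdot\soc(c')}{2m(\soc(a)-\soc(c'))}$. The key observation is that, by \Cref{thm:ex-post-FGC}, every panel $P$ in the support of Fair Greedy Capture satisfies $\soc(c(P))\le 127\,\soc(c')$; hence $\Pr[c(P)=a]=0$ for every alternative with $\soc(a)-\soc(c')>126\,\soc(c')$. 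This is precisely the range of social costs that forced the hypergeometric anti-concentration argument (Case~2) in \Cref{main_theorem}, so here I only need to bound the selection probability of alternatives $a$ with $\soc(a)-\soc(c')\le 126\,\soc(c')$, for which $\frac{\soc(c')}{\soc(a)-\soc(c')}\ge\frac{1}{126}$ and the target bound is $\Omega(\varepsilon/m)$.

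Next I would set up a Hoeffding-type concentration for the event $c(P)=a$, which requires $\soc(P,c')-\soc(P,a)\ge 0$. The crucial structural fact is that the partition of the agents into captured balls produced by Fair Greedy Capture is \emph{deterministic} given the metric: only the choice of a representative inside each ball, and the final filling step, are randomized. I would therefore condition on this partition $S_1,\dots,S_\ell$ (each of size $\lceil n/k\rceil$) and write $\soc(P,c')-\soc(P,a)=\sum_{i=1}^{\ell}\big(d(r_i,c')-d(r_i,a)\big)+(\text{filling terms})$, where the representatives $r_i$ are drawn independently and uniformly from their groups. Each summand lies in $[-D,D]$ with $D=d(a,c')$, and by fairness the whole expression has mean $\frac{k}{n}(\soc(c')-\soc(a))$, i.e.\ it is negative by an amount $\tau=\frac{k}{n}(\soc(a)-\soc(c'))$. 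Applying Hoeffding to the independent group contributions (and controlling the $k-\ell$ filling terms, which are few and each of magnitude at most $D$) together with the inequality $\soc(a)-\soc(c')\ge\frac{\varepsilon nD}{5}$, exactly as in \eqref{eq:soc-difference-good-candidate}, would yield a bound of the form $\exp(-\Omega(k\varepsilon^2/\gamma))$ on $\Pr[c(P)=a]$, where $\gamma\ge 1$ is a variance-inflation factor to be discussed below.

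Choosing $k$ so that this probability falls below $\varepsilon/(252\,m)$ and taking a union bound over the at most $m$ alternatives in $C\setminus W$ then delivers ex-ante distortion $1+\varepsilon$, via the same final estimate as in the proof of \Cref{main_theorem}. The bookkeeping here is routine once the concentration bound is in hand; the whole weight of the argument sits in establishing it.

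The main obstacle is the factor $\gamma$. Unlike uniform sampling without replacement, Fair Greedy Capture contributes one representative per captured ball, and a ball of large diameter produces a representative whose distance differential $d(r_i,c')-d(r_i,a)$ has a large range, inflating the variance entering Hoeffding's inequality. Bounding the aggregate contribution of the wide balls is where the argument must invoke the structural properties of the algorithm—essentially the same ball-counting estimates underlying \Cref{thm:ex-post-FGC}, namely that balls capturing agents close to $c'$ have small diameter and that wide balls are few because each of their agents can be charged against $\soc(c')$. Since the low-distortion regime forces $D=\mathcal{O}\!\big(\soc(c')/(\varepsilon n)\big)$, this charging loses a factor of order $1/\varepsilon$, so one expects $\gamma=\Theta(1/\varepsilon)$; this is exactly what degrades the sufficient panel size from $\mathcal{O}(\varepsilon^{-2}\ln\frac m\varepsilon)$ for uniform selection to $\mathcal{O}(\varepsilon^{-3}\ln\frac m\varepsilon)$. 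Making this accounting rigorous, and in particular ensuring that the randomized filling step cannot dominate the group contributions, is the crux of the proof.
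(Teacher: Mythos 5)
Your skeleton is the paper's: you use \Cref{thm:ex-post-FGC} to eliminate the high-cost regime (the paper uses it to pin $\frac{\soc(a)-\soc(c')}{\soc(c')}$ between $\varepsilon/2$ and a constant inside the logarithm, which is equivalent to your observation that alternatives with $\soc(a)-\soc(c')>126\,\soc(c')$ are selected with probability zero, so the hypergeometric Case~2 of \Cref{main_theorem} never arises), and you then exploit the fact that Fair Greedy Capture's ball partition is deterministic given the metric, splitting $\soc(P,c')-\soc(P,a)$ into a sum of independent per-ball representative terms plus a randomized filling stage --- exactly the paper's decomposition into a Stage-1 sum $X'$ and a Stage-2 remainder $X-X'$.

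However, the step you single out as the crux is a phantom, and it hides the place where the argument genuinely becomes delicate. There is no variance-inflation factor $\gamma$ from wide balls: for every agent $i$ we have $|d(i,c')-d(i,a)|\le d(a,c')=D$ by the triangle inequality, \emph{irrespective} of the diameter of the ball that captured $i$, so every representative's contribution lies in $[-D,D]$ --- as you yourself state one paragraph earlier, making your final paragraph internally inconsistent --- and Hoeffding on the Stage-1 sum directly gives $\Pr[X'\le kcD/4]\le 2\exp(-kc^2/32)$ with $c\ge\varepsilon/4$, an $\varepsilon^{-2}$ rate with no ball-diameter charging at all. (A minor related slip: fairness gives the mean of the \emph{total} sum; the Stage-1 mean must be bounded separately, via the per-agent Stage-1 selection probability $1/\lceil n/k\rceil\ge k/(2n)$, yielding $\E[X']\ge kcD/2$.) The genuine difficulty, which you dismiss with ``the filling terms are few,'' is the filling stage: the number of Stage-2 draws $k-K$ is not small in general, since only about $\lfloor n/\lceil n/k\rceil\rfloor$ balls form, so $k-K$ can be $\Theta(k)$ when $k$ is comparable to $n$, and these draws are made without replacement, so their sum needs its own concentration. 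The paper handles this by conditioning on $X'>kcD/4$, noting that $X<0$ is then impossible unless $k-K\ge kc/4$ (each filling term is at least $-D$), and applying Serfling's inequality (\Cref{thm:serfling74}) to the Stage-2 sum with deviation $kcD/8$; it is this conditional second-stage bound --- not any wide-ball geometry --- that costs the extra factor of $c^{-1}$ and produces the $\mathcal{O}\left(\varepsilon^{-3}\ln\frac{m}{\varepsilon}\right)$ panel size. As written, your plan would either incorrectly claim $\varepsilon^{-2}$ by waving off the filling stage, or stall when $k-K$ is large; replacing the $\gamma$-accounting with the conditional Serfling argument for Stage 2 is what is needed to close it.
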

Given that both uniform selection and Fair Greedy Capture are fair selection algorithms and that there is a general constant upper bound for fair selection algorithms, one might wonder whether ${\text{poly}}(\varepsilon^{-1})\cdot \ln{\frac{m}{\varepsilon}}$ samples are sufficient for any fair selection algorithm. We give a simple example that this is not the case. 
\begin{restatable}{theorem}{fairimp}
    There exists a fair selection algorithm that has distortion arbitrarily close to $2$ on instances with an even number of agents and two alternatives.
\end{restatable}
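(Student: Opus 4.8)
The plan is to take panel size $k=n/2$, for which the general guarantee of \Cref{thm:general_dist_metric} degrades to exactly $3-\tfrac{2k}{n}=2$, and to exhibit a fair algorithm that makes this bound essentially tight \emph{even with only two alternatives}. The algorithm $\A_{n/2}$ I would use is the crudest possible ``partition'' rule: fix the two complementary halves $H_1=\{1,\dots,n/2\}$ and $H_2=\{n/2+1,\dots,n\}$ of the (labeled) agent set, output $H_1$ with probability $1/2$ and $H_2$ with probability $1/2$, and ignore the alternatives entirely. This is a valid selection algorithm returning panels of size $k=n/2$; on the instance below the two halves are also the two geometric clusters, so one may equivalently view the choice as position-based.

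First I would verify fairness. Every agent lies in exactly one of $H_1,H_2$, and its unique half is chosen with probability $1/2=k/n$, so $\Pr[i\in\A_{n/2}]=k/n$ for all $i$. Hence $\A_{n/2}$ is fair, and \Cref{thm:general_dist_metric} already certifies that its distortion is at most $3-\tfrac{2k}{n}=2$. It then remains only to produce, for every $\eta>0$, a two-alternative instance on which the ratio is at least $2-\eta$.

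Next I would build the adversarial instance on the real line with the absolute-value metric. Place all agents of $H_1$ at position $1$ and all agents of $H_2$ at position $0$, and put the two alternatives at $a=2-\eta$ and $b=0$. A short computation gives $\soc(a)=\tfrac{n}{2}(1-\eta)+\tfrac{n}{2}(2-\eta)=\tfrac{n}{2}(3-2\eta)$ and $\soc(b)=\tfrac{n}{2}$, so $b=c(N)$ is optimal. The placement fools $H_1$: for an agent at position $1$ we have $d(1,a)=1-\eta<1=d(1,b)$, so the panel $H_1$ selects $c(H_1)=a$, whereas $H_2$, sitting on top of $b$, selects $c(H_2)=b$. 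Taking the expectation over the coin flip,
\begin{align*}
\E_{P\sim\A_{n/2}}[\soc(c(P))]=\tfrac12\,\soc(a)+\tfrac12\,\soc(b)=\tfrac{n}{2}(2-\eta),
\end{align*}
and dividing by $\soc(c(N))=\tfrac{n}{2}$ yields distortion $2-\eta$ on this instance. Letting $\eta\to0$ shows the distortion of $\A_{n/2}$ is at least $2$, matching the upper bound.

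The step I expect to require the most care is the simultaneous ``fair yet bad'' requirement, together with the tie-breaking. Fairness is precisely what rules out the trivial cheat of never selecting the fooled half, so the construction must exploit the \emph{correlation} of the selection (whole halves rather than independent agents) rather than any bias in the marginals; this is exactly the feature separating $\A_{n/2}$ from uniform selection and Fair Greedy Capture, which spread their probability mass so that an entire fooled half is exponentially unlikely to be drawn (cf.\ \Cref{main_theorem} and \Cref{thm:ex-ante-FGC}). The parameter $\eta$ serves to convert the knife-edge tie at $a=2$ (where $H_1$ is indifferent between $a$ and $b$) into a strict preference independent of the tie-breaking convention, which is why the distortion is ``arbitrarily close to'' rather than equal to $2$; I would carry $\eta$ as a free parameter throughout and send it to $0$ only at the end. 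Finally I would note that the argument works verbatim for every even $n$ with $k=n/2$, so the example is not an artifact of small populations but a genuine obstruction to a universal ${\rm poly}(\varepsilon^{-1})\cdot\ln\frac{m}{\varepsilon}$ bound for arbitrary fair algorithms.
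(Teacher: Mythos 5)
Your proof is correct and takes essentially the same approach as the paper's: fix $k=n/2$, use the fair algorithm that outputs one of two complementary halves equiprobably, and build a two-cluster line instance where the fooled half strictly prefers a slightly perturbed suboptimal alternative, giving per-instance distortion $2-\eta$. Your placement of the bad alternative at $2-\eta$ is just a mirror image of the paper's choice of $-1+\delta$, and the resulting computation $\tfrac{1}{2}\cdot\tfrac{n}{2}(3-2\eta)+\tfrac{1}{2}\cdot\tfrac{n}{2}=\tfrac{n}{2}(2-\eta)$ matches the paper's verbatim.
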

\begin{proof}
Let $\delta\in (0,1)$ and $k=n/2$. Consider an instance on the line with $n/2$ agents at location $0$, $n/2$ agents at location $1$, an alternative at location $-1+\delta$ and an alternative at location $1$. Let $\mathcal{A}_k$ be the selection algorithm, which returns a panel consisting, equiprobably, of all agents at location $0$ or all agents at location $1$. The algorithm is clearly fair as each agent is selected with probability $k/n=1/2$. Notice that the panel of agents located at position $0$ will select the alternative at location $-1+\delta$ of social cost $(3/2-\delta)n$, and the panel of agents located at position $1$ will select the alternative at location $1$ of social cost $n/2$. The expected social cost of the alternative selected by $\mathcal{A}_k$ is thus $(1-\delta/2)n$, which yields an ex-ante distortion $2-\delta$, approaching $2$ as $\delta$ approaches $0$.
\end{proof}

\section{Experiments}\label{sec:experiments}

We close off our paper with an experimental section studying the distortion achieved by both uniform selection and Fair Greedy Capture on real world data. 
\subsection{Experimental Setup}
We use the same datasets used by \citet{EbMi23} and \citet{EKM+22a} in their  papers: Firstly, the Adult dataset is probably one of the most popular datasets in research on algorithmic fairness \citep{FMSS22a}. It contains $32,561$ data points from a 1994 US population survey. Following \citet{EbMi23} we use the attributes sex, race, workclass, marital.status, and education.num from this dataset. Secondly, the European Social Survey (ESS) collection of datasets consists of population survey datasets from multiple European countries. We also use the data from \emph{ESS Round 9} from 2018, more specifically, the data from Austria ($n = 2498$).%
\paragraph{Representation Metric Construction.} We follow the method of \citet{EKM+22a}, and use the features of the datasets as a proxy for generating  representation metrics. Similarly, with the aforementioned work, we make the natural assumption that the more characteristics two individuals share the ``closer'' they feel to each other. In more detail, for each dataset, we have a set of features $F = \{f_1, \dots, f_\ell\}$. If a feature is {\em categorical}, e.g., sex, the distance of two individuals with respect to this feature is equal to $1$ if the values  are different and $0$ otherwise. For {\em continuous} features,  the distance of two individuals with respect to this feature is equal to absolute  difference of the  two values, normalized by the maximum distance. We then uniformly at random select a weight between $0$ and $1$ for each feature and set the weighted sum of all feature distances as the actual distances between the agents. 

The original datasets contains only features of individuals, which are  perceived as the agents. To construct the alternatives,  we  use the original agents as a proxy. That is, given a sampled panel $P$, we compare the agent (i.e. alternative) minimizing the distance to the panel $P$ to the agent minimizing the distance to the whole population.

 For each dataset, we average over  $10$ different generated metrics by randomly assigned weights to each feature and over $50$ sample panels from each metric. Average results are plotted with 95\% confidence intervals.  
\subsection{Our Results}
 In \Cref{fig:exante}, we see the average distortion of uniform selection and  Fair Greedy Capture for ESS Austria (left) and Adult (right) datasets. 
 In the ESS Austria dataset, we observe that the distortion of both algorithms converges to $1$ more rapidly than in the Adult dataset. This difference is attributed to the larger number of alternatives ($m=32561$) in the Adult dataset compared to the ESS Austria dataset ($m=2498$). As discussed in~\Cref{sec:uniform} and~\Cref{sec:FGC}, the size of  panel  required to achieve nearly optimal distortion depends on $m$.  %

 Another interesting phenomenon is that while both uniform selection and Fair Greedy Capture behave very similarly for $k = 1$ and $k = 2$, Fair Greedy Capture seems to converge to a distortion of $1$ more quickly than uniform selection and consistently achieves better distortion overall. The former similarity arises because when $k=1$, uniform selection and Fair Greedy Capture perform identically. Moreover, as detailed in~\Cref{sec:FGC}, when $k=2$, Fair Greedy Capture can return panels that exhibit significant distortion. The latter observation is particularly surprising, indicating that despite theoretical bounds suggesting both methods require the same number of samples for convergence to almost optimal distortion, Fair Greedy Capture consistently outperforms uniform selection.

\paragraph{Ex-Post Distortion.}
In~\Cref{fig:expost}, we can also observe the ex-post distortion of the two algorithms over any instance~\footnote{Due to the large number of instances (500 in total), we rounded the values to the second digit to enhance visual clarity} under the same datasets.  As expected, for small values of $k$, both algorithms exhibit outliers, representing realized panels with high distortion. With increasing values of $k$, Fair Greedy Capture consistently achieves low distortion, in contrast to uniform selection, where the presence of outliers remains significant. Notably, in the Adult dataset, uniform selection shows more extreme outliers than in the ESS-Austria dataset. This difference is due to the fact that in the Adult dataset, a significant percentage of the population ($\sim 8.3$\%) is concentrated at the same point, and in some cases, uniform selection fails to choose enough representatives from this group. On the other hand, Fair Greedy Capture always ensures proportional representation of any sufficiently cohesive group.

\section{Discussion}

This work addresses a critical gap in understanding the ability of representative sortition panels to reflect the preferences of an entire population. There are several open questions. First, we show that Fair Greedy Capture achieves ex-post distortion equal to at most $127$, and from~\Cref{det_lower}, we know that no deterministic algorithm can ensure distortion better than $5$. Closing this gap is an immediate open question. Another interesting direction is to show that indeed Fair Greedy Capture always achieves better ex-ante distortion than uniform selection, something that is observed in the experiments.

More broadly, in this work, we assume that the optimal alternative for a population is the one that minimizes the social cost. Various alternatives, such as the one minimizing the maximum cost or the one preferred by the majority in one-to-one comparisons, could be considered. Additionally, in real-life applications, a sortition panel may converge to multiple suggestions and not to just one. The question then arises: Do these suggestions proportionally reflect the opinions of different groups within the underlying population? These intriguing questions are left for future work.

\section*{Acknowledgments}
The work of Ioannis Caragiannis was partially supported by the Independent Research Fund Denmark (DFF) under grant 2032-00185B. Jannik Peters was partially funded by the Deutsche Forschungsgemeinschaft (DFG) under the grant
BR 4744/2-1 and the Graduiertenkolleg “Facets of Complexity” (GRK 2434).

\section*{Impact Statement}

This paper explores two critical questions: (a) whether representative panels selected from a broader population can reach good decisions for the entire population, and (b) the optimal size these panels should be to achieve this goal. While this work is purely theoretical, the findings can have societal implications, particularly if the studied algorithms are implemented in real-world scenarios. Throughout the paper, we discuss the immediate disadvantages of the different approaches that should be taken into account before these algorithms are implemented, despite the positive results we provide in their favor. 

\bibliography{abb, algo, sort}
\bibliographystyle{icml2024}

\newpage
\appendix
\onecolumn

\section{Proof of \Cref{thm:lower-bound-small-panels}}
\lowerlogm*
\begin{proof}
We will define a family of instances with the same set of $n$ agents and different sets of $m$ alternatives and a probability distribution $\mathcal{I}$ to select a random instance among them. All instances have an alternative of social cost $(2-4\varepsilon)n$, and all other alternatives of social cost $2n$. Then, we will show that any $k$-sized panel $P$ satisfies $\E_{\mathcal{I}}[\soc(c(P))]\geq (2-2\varepsilon)n$. Notice that $\E_{\mathcal{I}}[\soc(c(P))]$ is the expectation of the social cost of alternative $c(P)$, taken over the instances produced by $\mathcal{I}$. Thus, for every selection algorithm $\mathcal{A}_k$, the expectation, taken over the random selection of the instance according to $\mathcal{I}$, of the expected social cost of alternative $c(\mathcal{A}_k)$ is at least $(2-2\varepsilon)n$. Then, using the probabilistic method, we have that there exists a single instance for which $\E[\soc(c(\mathcal{A}_k))]\geq (2-2\varepsilon)n$. The ex-ante distortion of $\mathcal{A}_k$, when applied to this instance, will then be $\frac{(2-2\varepsilon)n}{(2-4\varepsilon)n} > 1+\varepsilon$; this will complete the proof.

Our construction of the family of instances is as follows. First, there is a set $N$ of $n$ agents, who have distance $2$ to each other. Let $\mathcal{S}$ be the collection of all subsets of $n/2$ agents from $N$. For every different $(m-1)$-tuple $(S_1, S_2, \dots, S_{m-1})$ of subsets from $\mathcal{S}^{m-1}$, we define the instance with alternatives $c_1$, $c_2, \dots, c_{m-1}$ so that alternative $c_i$ is at distance $1$ from the agents in $S_i$ and at distance $3$ from the agents in $N\setminus S_i$. All instances have an additional alternative $c_0$ at distance $2-4\varepsilon$ from every agent. We can easily verify that the distances between agents and alternatives form a metric. Clearly, in all instances defined in this way, alternative $c_0$ has social cost $(2-4\varepsilon)n$, while all other alternatives have social cost $2n$.

Now, the probability distribution $\mathcal{I}$ returns the instance identified by the $(m-1)$-tuple $(S_1, S_2, \dots, S_{m-1})$
by selecting, for $i=1,2, \dots, m-1$, set $S_i$ uniformly at random among all elements of $\mathcal{S}$, so that the selection is independent for the different $i$s.

We will prove the desired property of the probability distribution $\mathcal{I}$ using the following lemma.

\begin{lemma}\label{lem:prob-half}
Consider a panel $P$ of $k$ agents. Then, $\Pr_{\mathcal{I}}[c(P)=c_0]\leq \frac{1}{2}$.
\end{lemma}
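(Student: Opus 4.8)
The plan is to reduce the event $c(P)=c_0$ to a statement about $m-1$ independent hypergeometric random variables and then invoke an anti-concentration bound. First I would compute the panel social costs. Since every agent lies at distance $2-4\varepsilon$ from $c_0$, and at distance $1$ (resp. $3$) from $c_i$ for agents in $S_i$ (resp. in $N\setminus S_i$), we get $\soc(P,c_0)=(2-4\varepsilon)k$ and $\soc(P,c_i)=3k-2X_i$, where $X_i:=|P\cap S_i|$. Hence $c_0$ has weakly smaller panel cost than $c_i$ exactly when $X_i\le(1+4\varepsilon)k/2$. Because $c(P)=c_0$ forces $c_0$ to beat every $c_i$, and the sets $S_1,\dots,S_{m-1}$ are drawn independently and identically under $\mathcal{I}$, I obtain
\[
\Pr_{\mathcal{I}}[c(P)=c_0]\le\prod_{i=1}^{m-1}\Pr\bigl[X_i\le(1+4\varepsilon)k/2\bigr]=p^{\,m-1},\qquad p:=\Pr\bigl[X\le(1+4\varepsilon)k/2\bigr],
\]
where $X=|P\cap S|$ for a single uniformly random $S$ of size $n/2$.

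Using $p^{\,m-1}\le e^{-(m-1)(1-p)}$ (via $\ln(1/p)\ge 1-p$), it suffices to prove $(m-1)(1-p)\ge\ln 2$, i.e. to lower-bound $1-p=\Pr[X>(1+4\varepsilon)k/2]$. Here $X$ is hypergeometric with population $n$, $k$ successes and $n/2$ draws; since $k\le n/2$ it takes values in $[0,k]$, and the key structural observation is that $X$ and $k-X$ are identically distributed (the complement $N\setminus S$ is also uniform of size $n/2$), so $X$ is symmetric about its mean $k/2$. Thus $1-p$ is the upper tail of a symmetric hypergeometric at deviation $2\varepsilon k$ above the mean.

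The crux is an anti-concentration estimate of the form $\Pr[X\ge k/2+2\varepsilon k]\ge c_1 e^{-C\varepsilon^2 k}$ for an absolute constant $c_1>0$ and a constant $C$ comfortably below $288$, valid in our regime $\varepsilon\le 1/30$, $k\le n/2$. I would establish this by estimating the hypergeometric point probabilities directly: bounding $\Pr[X=j]$ for $j$ in a short window just above $k/2+2\varepsilon k$ with Stirling/binomial-coefficient inequalities and summing, so that the exponent picks up the expected $\Theta(\varepsilon^2 k)$ Gaussian-type decay with a constant well below $288$. The smallness of $\varepsilon$ and $k\le n/2$ are exactly what keep the lower-order error terms, and the ratio $(n-k)/(n-1)$ controlling the variance (note $\mathrm{Var}(X)\ge k/8$), under control. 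Obtaining honest, small enough constants here is the main obstacle, since the final bound is tight up to constants; this is precisely where the paper's ``new anti-concentration inequality for the hypergeometric distribution'' is needed.

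Finally I would combine the pieces, using the two available lower bounds $m-1\ge 6^9$ and $m-1\ge e^{288\varepsilon^2 k}$ (the latter from $k\le\ln(m-1)/(288\varepsilon^2)$) and whichever is stronger. Plugging in the anti-concentration bound gives $(m-1)(1-p)\ge c_1(m-1)e^{-C\varepsilon^2 k}\ge c_1\max\{\,6^9 e^{-C\varepsilon^2 k},\,e^{(288-C)\varepsilon^2 k}\,\}$; as a function of $\varepsilon^2 k\ge 0$ the right-hand side is minimized at the crossover $e^{288\varepsilon^2 k}=6^9$, where it equals $c_1\,6^{\,9(288-C)/288}$, and the constants $6^9$ and $\varepsilon\le 1/30$ are calibrated so that this exceeds $\ln 2$ for $C$ kept safely below $288$. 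I would emphasize that a single naive estimate $1-p\ge\ln 2\cdot e^{-288\varepsilon^2 k}$ is \emph{false} for tiny $\varepsilon^2 k$ (there $1-p\approx\tfrac12<\ln 2$), so retaining the $m-1\ge 6^9$ bound in the small-deviation regime is essential rather than replacing $m-1$ by $e^{288\varepsilon^2 k}$ throughout. This yields $(m-1)(1-p)\ge\ln 2$, whence $\Pr_{\mathcal{I}}[c(P)=c_0]\le p^{\,m-1}\le e^{-(m-1)(1-p)}\le\tfrac12$, as claimed.
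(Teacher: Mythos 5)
Your reduction coincides with the paper's: from $\soc(P,c_i)=3k-2|P\cap S_i|$ and $\soc(P,c_0)=(2-4\varepsilon)k$, the event $c(P)=c_0$ forces $|P\cap S_i|\le k/2+2\varepsilon k$ for every $i$, and independence of the $S_i$ gives $\Pr_{\mathcal{I}}[c(P)=c_0]\le p^{m-1}\le e^{-(m-1)(1-p)}$; your symmetry observation ($|P\cap S|$ and $k-|P\cap S|$ are identically distributed) is correct and neatly covers the small-deviation regime, which the paper instead handles for $k\le 9$ via the event $P\subseteq S_i$ and $m\ge 1+6^9$. The genuine gap is that the decisive quantitative step is asserted rather than proved: everything hinges on a tail bound $\Pr[X\ge k/2+2\varepsilon k]\ge c_1 e^{-C\varepsilon^2 k}$ with explicit absolute constants satisfying $c_1\cdot 6^{9(1-C/288)}\ge \ln 2$ at your crossover, and you never produce $(c_1,C)$ --- you yourself defer this as ``the main obstacle.'' Since, as you note, the lemma is tight up to constants, a proof that black-boxes exactly these constants leaves the entire content of the lemma unestablished. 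Moreover, the form you state is not what your proposed method directly yields: summing Stirling-type point estimates over a window of length $\min\{1/\varepsilon,\sqrt{k}\}$, each point mass of order $k^{-1/2}e^{-\Theta(\varepsilon^2 k)}$, gives a prefactor of order $\min\{1,(\varepsilon\sqrt{k})^{-1}\}$, and $\varepsilon\sqrt{k}=\sqrt{\varepsilon^2 k}$ is \emph{unbounded} in your regime ($\varepsilon^2 k$ may be as large as $\ln(m-1)/288$). The absolute-constant form survives only after absorbing this polynomial factor into the exponent by inflating $C$ (e.g.\ via $\sqrt{x}\le e^{x/2}$), a step your sketch omits but your calibration silently requires; and the Stirling estimates themselves need separate care for small $k$ (the paper's Lemma~\ref{lem:anti-concentration} assumes $k\ge 10$), which your symmetry remark could cover but is not carried out.

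For comparison, the paper avoids window summation entirely: it lower-bounds the single point mass at $\ell^*$, the smallest integer exceeding $k/2+2\varepsilon k$, by $\frac{1}{\sqrt{k}}\left(\frac{k}{\ell^*}-1\right)^{2\ell^*-k}$ (Lemma~\ref{lem:anti-concentration}), keeps the $1/\sqrt{k}$, and absorbs it using the hypothesis $k\le m-1$, so that $(m-1)/\sqrt{k}\ge \sqrt{m-1}$; a case split on $\ell^*>k/2+3\varepsilon k$ (equivalently $\varepsilon k<1$, where the mass is at least $\frac{1}{32\sqrt{k}}$) versus $\ell^*\le k/2+3\varepsilon k$ (where the exponent is $144\varepsilon^2 k=\frac{1}{2}\cdot 288\,\varepsilon^2 k$, leaving the other half of the budget for $\sqrt{m-1}$) then closes both regimes. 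Your min-max calculation at the crossover $e^{288\varepsilon^2 k}=6^9$ is an attractive, more unified way to see how $6^9$, $\varepsilon\le 1/30$, and $288$ interlock, and it would constitute a complete alternative endgame once a concrete pair $(c_1,C)$ is supplied --- but supplying it amounts to re-proving the paper's appendix lemma together with the summation and absorption bookkeeping described above.
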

The crucial property our construction and the probability distribution $\mathcal{I}$ have is that, for $i\in [m-1]$, the quantity $|P\cap S_i|$ is a random variable following the hypergeometric distribution with $n$ as population size, $n/2$ success states, and $k$ draws. Furthermore, the random variables $|P\cap S_i|$ and $|P\cap S_j|$ are independent for $i\not=j$.

Let us warm up by proving the lemma assuming that $k\leq 9$.  Using the inequality $(t/s)^s\leq {t \choose s}\leq (et/s)^s$ for positive integers $s$ and $t$ with $t\geq s$, we have that, for $i\in [m-1]$, the probability that all agents in $P$ belong to set $S_i$ is at least 
\begin{align*}
    \frac{{n/2 \choose k}}{{n \choose k}} &\geq \frac{\left(\frac{n}{2k}\right)^k}{\left(\frac{en}{k}\right)^k}=\frac{1}{(2e)^k} \geq \frac{1}{6^9}.
\end{align*}
Notice that we would have $\soc(P,c_i)=k<\soc(P,c_0)$ in this case. Thus, omitting $\mathcal{I}$ from notation and writing just $\Pr$ instead of $\Pr_{\mathcal{I}}$, we can bound the probability that panel $P$ selects alternative $c_0$ as follows:
\begin{align*}
    \Pr[c(P)=c_0] &\leq \left(1-\frac{1}{6^9}\right)^{m-1} \leq 1/e < 1/2.
\end{align*}
The second inequality follows since $m\geq 1+6^9$ and using the inequality $(1-1/t)^t\leq 1/e$ for $t>1$.

In the following, we assume that $k\geq 10$. Observe that alternative $c_0$ is selected by the panel only if $\soc(P,c_i)\geq \soc(P,c_0)$ for every $i\in [m-1]$. Furthermore, observe that $\soc(P,c_i)=|P\cap S_i|+3(k-|P\cap S_i|)=3k-2|P\cap S_i|$ and recall that $\soc(P,c_0)=(2-2\varepsilon)k$. Thus, the condition $\soc(P,c_i)\geq \soc(P,c_0)$ is equivalent to $|P\cap S_i|\leq \frac{k}{2}+2\varepsilon k$. Using these observations, we have
\begin{align}\nonumber
    &\Pr[c(P)=c_0]\\\nonumber
    &\leq \prod_{i=1}^{m-1}{\Pr[\soc(P,c_i)\geq \soc(P,c_0)]}\\\nonumber
    &= \prod_{i=1}^{m-1}{\Pr\left[|P\cap S_i|\leq \frac{k}{2}+2\varepsilon k\right]}\\\label{eq:prob_c0-1}
    &=\prod_{i=1}^{m-1}{\left(1-\Pr\left[|P\cap S_i|> \frac{k}{2}+2\varepsilon k\right]\right)}.
\end{align}
Now, let $\ell^*$ be the smallest integer that is strictly higher than $\frac{k}{2}+2\varepsilon k$. Then, inequality (\ref{eq:prob_c0-1}) implies that
\begin{align}\label{eq:prob_c0-2}
    \Pr[c(P)=c_0] &\leq \prod_{i=1}^{m-1}{\left(1-\Pr[|P\cap S_i|=\ell^*]\right)}.
\end{align}
To continue, we will need an anti-concentration bound for the random variable $|P\cap S_i|$. The proof appears in Appendix~\ref{app:sec:anti-concentration}.

\begin{restatable}{lemma}{hyperconc}\label{lem:anti-concentration}
Let $k$ be an integer such that $10\leq k\leq n/2$. For every integer $\ell$ with $k/2\leq \ell\leq 2k/3$, it holds that
\begin{align*}
    \Pr[|P\cap S_i| = \ell] &\geq \frac{1}{\sqrt{k}}\cdot \left(\frac{k}{\ell}-1\right)^{2\ell-k}
\end{align*}    
\end{restatable}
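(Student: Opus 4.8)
The plan is to prove the bound by comparing the hypergeometric probability with the corresponding binomial probability and then estimating the binomial via Stirling's formula. First I would write the probability in closed form: since $S_i$ is a uniformly random $(n/2)$-subset of $N$ and $P$ is a fixed $k$-subset, $|P\cap S_i|$ is hypergeometric, and using the symmetric form of the hypergeometric distribution we can write $\Pr[|P\cap S_i|=\ell]=\binom{k}{\ell}\frac{\binom{n-k}{n/2-\ell}}{\binom{n}{n/2}}$. This isolates the binomial coefficient $\binom{k}{\ell}$ and lets us factor the probability as $\binom{k}{\ell}2^{-k}\cdot R_\ell$, where $R_\ell:=2^{k}\binom{n-k}{n/2-\ell}/\binom{n}{n/2}$ is the \emph{without-replacement correction} relative to $\mathrm{Bin}(k,1/2)$.

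The first step is to show $R_\ell\ge 1$ for $\ell\ge k/2$, i.e.\ that sampling without replacement only increases the probability relative to the binomial. By Vandermonde's identity, $\binom{n}{n/2}=\sum_{j}\binom{k}{j}\binom{n-k}{n/2-j}$ and $2^{k}=\sum_j\binom{k}{j}$, so $R_\ell\ge 1$ is equivalent to the statement that $\binom{n-k}{n/2-\ell}$ is at least the $\binom{k}{j}$-weighted average of the numbers $\binom{n-k}{n/2-j}$. The sequence $j\mapsto\binom{n-k}{n/2-j}$ is maximized at $j=k/2$ and the weights $\binom{k}{j}$ are also concentrated around $k/2$; since $\ell\ge k/2$ keeps $n/2-\ell$ on the correct side of the peak, here the constraint $k\le n/2$ is used to guarantee that $n/2-\ell\ge (n-k)/2-k/6>0$ stays in the admissible range. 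This reduces the step to a log-concavity comparison of central binomial coefficients.

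The second step estimates $\binom{k}{\ell}2^{-k}$ from below. Writing $x=\ell/k\in[1/2,2/3]$ and applying Stirling's formula with explicit error terms, $\binom{k}{\ell}2^{-k}=\Theta\!\big(1/\sqrt{k}\big)\cdot\big(2x^{1-x}(1-x)^{x}\big)^{-k}$. The key algebraic fact is that $2x^{1-x}(1-x)^{x}\le 1$ on $[1/2,2/3]$, with equality exactly at $x=1/2$, so the exponential factor is at least $1$ and in fact dominates $\big(\tfrac{k-\ell}{\ell}\big)^{2\ell-k}=\big(\tfrac{1-x}{x}\big)^{(2x-1)k}$; a short comparison of the two exponential rates (taking logarithms and checking the inequality pointwise in $x$) yields the claimed power of $(k/\ell-1)$. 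Combining with Step~A and the $\Theta(1/\sqrt{k})$ prefactor produces the bound.

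The main obstacle is pinning the constant down to exactly $1/\sqrt{k}$. The binomial estimate alone gives a central value $\binom{k}{k/2}2^{-k}\approx\sqrt{2/\pi}\,/\sqrt{k}$, which is smaller than $1/\sqrt{k}$, so the without-replacement enhancement $R_\ell$ must supply the missing factor; this is precisely why $k\le n/2$ (rather than $k\le n$) is indispensable, as $R_\ell\to 1$ when $k\ll n$ and grows toward $\sqrt2$ as $k\to n/2$. Thus the delicate part is carrying the explicit Stirling error terms together with a \emph{quantitative} lower bound on $R_\ell$ simultaneously, rather than the separate facts $R_\ell\ge 1$ and the exponential-rate comparison, which are comparatively routine. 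The secondary difficulty is making the $R_\ell\ge 1$ argument uniform over the whole range $\ell\in[k/2,2k/3]$, since as $\ell$ approaches $2k/3$ the point $n/2-\ell$ drifts away from the center of the $\binom{n-k}{\cdot}$ profile and the crude ``value $\ge$ weighted average'' comparison must be replaced by a more careful estimate.
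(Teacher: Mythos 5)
Your route---factoring the hypergeometric pmf as $\binom{k}{\ell}2^{-k}\cdot R_\ell$ with $R_\ell=2^k\binom{n-k}{n/2-\ell}\big/\binom{n}{n/2}$---is genuinely different from the paper's (which applies Stirling with explicit error terms once to the full factorial ratio and splits it into a polynomial prefactor, an exponential factor, and a Stirling-error factor), but your Step A is false, and not merely ``delicate near $\ell=2k/3$'' as you suggest. Since $R_\ell\propto\binom{n-k}{n/2-\ell}$, one has $R_{\ell+1}/R_\ell=\frac{n/2-\ell}{n/2-k+\ell+1}<1$ for all $\ell\ge k/2$, so $R_\ell$ is strictly decreasing on the whole range, and when $k$ is a constant fraction of $n$ it becomes \emph{exponentially} small away from the center: for $n=2k$ and $\ell=2k/3$, $R_\ell=2^k\binom{k}{k/3}\big/\binom{2k}{k}\approx\sqrt{\pi k}\,\binom{k}{k/3}2^{-k}=e^{-\Theta(k)}$. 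Your ``value at $\ell$ is at least the $\binom{k}{j}$-weighted average'' comparison is sound only for $\ell=k/2+O(\sqrt{k})$, where $n/2-\ell$ still lies in the bulk of the profile $j\mapsto\binom{n-k}{n/2-j}$; once $\ell$ is linearly separated from $k/2$ the value sits exponentially below the average, so $R_\ell\ge1$ is simply false there and no ``more careful estimate'' rescues it. Your rate comparison in Step B is correct (indeed $D(x\,\|\,\tfrac12)\le(2x-1)\ln\frac{x}{1-x}$ on $[\tfrac12,\tfrac23]$, with slack exactly where $R_\ell$ collapses), but the proposal offers no mechanism for trading that slack against the exponential loss in $R_\ell$; performing this trade in one shot is precisely what the paper's direct Stirling computation on the whole ratio does.

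The obstacle you flag around the constant is also decisive rather than delicate: your decomposition needs $R_\ell\ge\sqrt{\pi/2}\approx1.25$ uniformly to lift the central binomial value $\sqrt{2/\pi}/\sqrt{k}$ up to $1/\sqrt{k}$, yet, as you observe yourself, $R_\ell\to1$ when $k\ll n$---a regime fully permitted by the hypothesis $k\le n/2$. In that regime the inequality with constant exactly $1$ actually fails: for $n=100$, $k=10$, $\ell=5$ one computes $\binom{50}{5}^2\big/\binom{100}{10}\approx0.259<1/\sqrt{10}\approx0.316$. So no completion of your plan can prove the lemma verbatim over the full stated range; the constant $1$ genuinely requires $n$ comparable to $2k$ (and, for what it is worth, the paper's own prefactor bound invokes $\frac{n}{n-2\ell}\ge\frac{k}{k-\ell}$, which is equivalent to $n\le2k$, so its argument is implicitly in that regime; downstream only constant factors are affected). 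The honest repair of your approach is to aim for $\binom{k}{\ell}2^{-k}R_\ell\ge\frac{c}{\sqrt{k}}\left(\frac{k}{\ell}-1\right)^{2\ell-k}$ with an explicit $c\in(0,1)$, making Step B's rate slack quantitative and dropping $R_\ell\ge1$ entirely---at which point you are essentially redoing the paper's Stirling computation in different notation.
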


Notice that $\ell^*<1+k/2+2\varepsilon k$ and, using the assumptions $k\geq 10$ and $\varepsilon\leq 1/30$, we get $\ell^*<k(3/5+2\varepsilon)\leq 2k/3$. By applying Lemma~\ref{lem:anti-concentration} for $\ell=\ell^*$ and substituting the value of $\Pr[|P\cap S_i|=\ell^*]$ in (\ref{eq:prob_c0-2}), we get
\begin{align}\label{eq:prob_c0-final}
\Pr[c(P)=c_0] &\leq \left(1-\frac{1}{\sqrt{k}}\left(\frac{k}{\ell^*}-1\right)^{2\ell^*-k}\right)^{m-1}  
\end{align}

We will complete the proof of Lemma~\ref{lem:prob-half} by distinguishing between two cases for $\ell^*$.

{\bf Case 1.} $\ell^*>\frac{k}{2}+ 3\varepsilon k$. Then, by the definition of $\ell^*$, it is $\ell^*-1<\frac{k}{2}+2\varepsilon k$ which implies that $
\varepsilon k < 1$. Hence, $2\ell^*-k < 2+4\varepsilon k<6$, i.e., $2\ell^*-k\leq 5$ since both $\ell^*$ and $k$ are integers. Recall that $\ell^*\leq 2k/3$. Then, we have $\left(\frac{k}{\ell^*}-1\right)^{2\ell^*-k} \geq (1/2)^{2\ell^*-k}\geq 1/32$. Now, inequality (\ref{eq:prob_c0-final}) yields
\begin{align*}
    \Pr[c(P)=c_0] &\leq \left(1-\frac{1}{32\sqrt{k}}\right)^{m-1} \leq \exp\left(-\frac{m-1}{32\sqrt{k}}\right)\\
    &\leq \exp\left(-\frac{\sqrt{m-1}}{32}\right)\leq 1/e<1/2.
\end{align*}
The second inequality follows since $1-t\leq e^{-t}$, the third one follows by our assumption $k\leq m-1$, and the fourth one is due to the assumption $m\geq 1+6^9$.

{\bf Case 2.} $\ell^*\leq \frac{k}{2}+3\varepsilon k$. Thus,
$\frac{k}{\ell^*}-1 \geq 1-\frac{12\varepsilon k}{k+6\varepsilon k}\geq 1-12\varepsilon$ and inequality (\ref{eq:prob_c0-final}) yields
\begin{align*}
    \Pr[c(P)=c_0] &\leq \left(1-\frac{1}{\sqrt{k}}\left(1-12\varepsilon\right)^{6\varepsilon k}\right)^{m-1}\\
    &\leq \left(1-\frac{1}{\sqrt{k}}\cdot e^{-144\varepsilon^2 k}\right)^{m-1}\\
    &\leq \exp\left(-\frac{m-1}{\sqrt{k}}\cdot e^{-144\varepsilon^2 k}\right)\\
    &\leq \exp\left(-\sqrt{m-1}\cdot e^{-\frac{1}{2}\ln{(m-1)}}\right)\\
    &=1/e<1/2.
\end{align*}
The second inequality follows since $\varepsilon\in (0,1/30]$ and using the property $1-t\geq e^{-2t}$ for $t\in [0,1/2]$. The third inequality follows since $1-t\leq e^{-t}$ and the fourth one is due to our assumption $k\leq \frac{\ln{(m-1)}}{288\varepsilon^2}$.
\end{proof}

We are now ready to complete the proof of Theorem~\ref{thm:lower-bound-small-panels}. Notice that Lemma~\ref{lem:prob-half} implies that, for any panel $P$ of $k$ agents,
\begin{align*}
\E_{\mathcal{I}}[\soc(c(P))] &= (2-4\varepsilon)n\cdot \prob_{\mathcal{I}}[c(P)=c_0]\\
&\quad\quad +2n\cdot \prob_{\mathcal{I}}[c(P)\not=c_0]\\
&=2n-4\varepsilon n\cdot \prob_{\mathcal{I}}[c(P)=c_0]\\
&\geq (2-2\varepsilon)n.
\end{align*}
The proof of the desired property of distribution $\mathcal{I}$ is complete, and the theorem follows.

\section{Proof of Lemma~\ref{lem:anti-concentration}}\label{app:sec:anti-concentration}
\hyperconc*
\begin{proof}
The random variable $|P\cap S_i|$ follows the hypergeometric distribution with a population size of $n$, $n/2$ success states, and $k$ draws. Thus,
\begin{align}\label{eq:factorials}
    \Pr[|P\cap S_i|=\ell] &=\frac{{n/2 \choose \ell}{n/2 \choose k-\ell}}{{n \choose k}} = \frac{(n/2)!^2\cdot k!\cdot (n-k)!}{n!\cdot \ell!\cdot (n/2-\ell)!\cdot (k-\ell)!\cdot (n/2-k+\ell)!}.
\end{align}
We will now use Stirling's approximation which states that $t!=\alpha(t)\cdot \beta(t)\cdot \gamma(t)$ with $\alpha(t)=\sqrt{2\pi t}$, $\beta(t)=\left(\frac{t}{e}\right)^t$, and $\gamma(t)\in \left[e^{\frac{1}{12t+1}}, e^{\frac{1}{12t}}\right]$. Then, equation (\ref{eq:factorials}) can be written as $\Pr[|P\cap S_i|=\ell]=A\cdot B\cdot \Gamma$, where 
\begin{align*}
A &= \frac{\alpha(n/2)^2\cdot \alpha(k)\cdot \alpha(n-k)}{\alpha(n)\cdot \alpha(\ell)\cdot \alpha(n/2-\ell)\cdot \alpha(k-\ell)\cdot \alpha(n/2-k+\ell)},\\
B &= \frac{\beta(n/2)^2\cdot \beta(k)\cdot \beta(n-k)}{\beta(n)\cdot \beta(\ell)\cdot \beta(n/2-\ell)\cdot \beta(k-\ell)\cdot \beta(n/2-k+\ell)}, \mbox{ and}\\
    \Gamma &=\frac{\gamma(n/2)^2\cdot \gamma(k)\cdot \gamma(n-k)}{\gamma(n)\cdot \gamma(\ell)\cdot \gamma(n/2-\ell)\cdot \gamma(k-\ell)\cdot \gamma(n/2-k+\ell)}.
\end{align*}

Using the definition of $\alpha(t)$, we have
\begin{align*}
A &=\sqrt{\frac{kn(n-k)}{2\pi\ell(n-2\ell)(k-\ell)(n-2k+2\ell)}}\geq \sqrt{\frac{k^3}{4\pi \ell^2(k-\ell)^2}} \geq \frac{2}{\sqrt{\pi k}}.
\end{align*}
The first inequality follows by observing that the assumption $k\leq n/2$ implies that $\frac{n}{n-2\ell}\geq \frac{k}{k-\ell}$ and $\frac{n-k}{n-2k+2\ell}\geq \frac{k}{2\ell}$. The second inequality follows since the expression $\ell^2(k-\ell)^2$ is maximized for $\ell=k/2$ at $k^4/16$. 

Using the definition of $\beta(t)$, we have
\begin{align*}
    B &= \frac{\left(\frac{n}{2e}\right)^n\cdot \left(\frac{k}{e}\right)^k\cdot \left(\frac{n-k}{e}\right)^{n-k}}{\left(\frac{n}{e}\right)^n\cdot \left(\frac{\ell}{e}\right)^\ell\cdot \left(\frac{n-2\ell}{2e}\right)^{n/2-\ell}\cdot \left(\frac{k-\ell}{e}\right)^{k-\ell}\cdot \left(\frac{n-2k+2\ell}{2e}\right)^{n/2-k+\ell}}\\
    &=\frac{k^k\cdot (n-k)^{n-k}}{2^k\cdot \ell^\ell\cdot (n-2\ell)^{n/2-\ell}\cdot (k-\ell)^{k-\ell}\cdot (n-2k+2\ell)^{n/2-k+\ell}}\\
    &=\left(\frac{k(n-2k+2\ell)}{2(k-\ell)(n-k)}\right)^k\cdot \left(\frac{(k-\ell)(n-2\ell)}{\ell(n-2k+2\ell)}\right)^\ell\cdot \left(\frac{(n-k)(n-k)}{(n-2\ell)(n-2k+2\ell)}\right)^{n/2}\\
    &\geq \left(\frac{\ell}{k-\ell}\right)^k\cdot \left(\frac{k-\ell}{\ell}\right)^{2\ell}=\left(\frac{k}{\ell}-1\right)^{2\ell-k}.
\end{align*}
The inequality follows since the assumption $k/leq n/2$ implies that $\frac{n-2k+2\ell}{n-k}\geq \frac{2\ell}{k}$ and $\frac{n-2\ell}{n-2k+2\ell}\geq \frac{k-\ell}{\ell}$, while among all pairs of positive numbers with a given sum their product is maximized when they are equal. Thus, $\frac{(n-k)(n-k)}{(n-2\ell)(n-2k+2\ell)}\geq 1$.

Finally, using the definition of $\gamma(t)$, we have \begin{align*}
    \Gamma &\geq \frac{1}{\gamma(n)\cdot \gamma(\ell)\cdot \gamma(n/2-\ell)\cdot \gamma(k-\ell)\cdot \gamma(n/2-k+\ell)}\\
    &\geq \exp\left(-\frac{1}{12n}-\frac{1}{12\ell}-\frac{1}{12(n/2-\ell)}-\frac{1}{12(k-\ell)}-\frac{1}{12(n/2-k+\ell)}\right)\\
    &\geq \exp\left(-\frac{1}{24k}-\frac{1}{6k}-\frac{1}{4k}-\frac{1}{4k}-\frac{1}{6k}\right)=\exp\left(-\frac{7}{8k}\right)\geq \exp(-7/80)>\frac{\sqrt{\pi}}{2}.
\end{align*}
In the third inequality, we have used the inequalities $k\leq n/2$ and $\ell \in [k/2,2k/3]$, while the fourth one follows since $k\geq 10$. We conclude that $\Pr[|P\cap S_i|=\ell]\geq \frac{1}{\sqrt{k}}\left(\frac{k}{\ell}-1\right)^{2\ell-k}$, as desired.
\end{proof}

\section{Missing Proofs of \Cref{sec:FGC}}

\exantefgc*
\begin{proof}

   Let $P$ be the panel selected by FGC, $c' = c(N)$, and $a \in C$ be any candidate. Following the exact same arguments as in the proof of  \Cref{main_theorem}, for getting  distortion of at most $(1 + \varepsilon)$, we will show that $\Pr[c(P) = a] \le \frac{\varepsilon \soc(c')}{2m (\soc(a) - \soc(c'))}.$
    
    When Fair Greedy Capture breaks ties in a consistent way (with respect to the agents that are located at the boundaries of a ball), it  can be equivalently described as following. The voters are partitioned into $k'\leq k$ sets, $S_1,\ldots S_{k'}$, where each $S_j$ with $j<k'$ contains $\lceil n/k \rceil$ voters and the last set contains at most $\lceil n/k \rceil$ voters. Then, the sampling procedure is executed into two stages. In Stage 1, from each $S_j$, we sample one voter with probability $1/\lceil n/k \rceil$ and in Stage 2, the remaining probability is allocated equally among the voters that were not selected in Stage 1. Note, that because $|S_{k'}|\leq \lceil  n/k \rceil$, it is probable no agent to be selected from $S_{k'}$, and thus we denote with $K\in [k'-1,k']$ the number of  voters that are selected  in Stage 1. Then, if $p_1,\ldots p_k$ are the random variables that indicate the agents in the panel $P$,  each $p_j$ with $j\leq K$ is sampled from $S_j$ independently  and the last $k-K$ agents are sampled from $N\setminus \{p_1,\ldots, p_{K}\}$ uniformly at random without replacement. Let $X_i = d(p_i, a) - d(p_i,c(N))$. A panel $P$ returns the candidate $a$ instead of $c(N)$, when $X=\sum_{i=1}^k X_i\leq 0$. Let $X'=\sum_{i=1}^K X_i$. 
    Next, we will show that (a) $\Pr[X'\leq kcD/4 ]\leq \frac{\varepsilon \soc(c')}{4m (\soc(a) - \soc(c'))}$ and  (b) $\Pr[X\leq 0 \mid X'> kcD/4]\leq \frac{\varepsilon \soc(c')}{4m (\soc(a) - \soc(c'))}$. Then, we have that 
    \begin{align*}
        \Pr[X\leq 0] &= \Pr[X\leq 0 \mid X' \leq kcD/4]\cdot \Pr[X'\leq kcD/4] + \Pr[X\leq 0 \mid X' > kcD/4]\cdot \Pr[X'> kcD/4] \\
        &\leq \frac{\varepsilon \soc(c')}{2m (\soc(a) - \soc(c'))}
    \end{align*}
    where the last inequality follows from (a) and (b), and the theorem follows. 

    We start by showing (a). Using identical arguments as in Case 1 in the proof of~\Cref{main_theorem}, we conclude that we can choose a  $1 \ge c \ge \frac{\varepsilon}{4} $ such that $\soc(a) - \soc(c(N)) = cnD$. 
    Thus, we have
    \begin{align*}
       \E[X']=\sum_{j=1}^{k'} \sum_{i\in S_j} \Pr[ i \text{ is selected}] \cdot \left( d(i,a)-d(i,c(N)) \right) 
        \geq&  \sum_{j=1}^{k'} \sum_{i\in S_j} \frac{1}{\lceil n/k\rceil}\cdot \left( d(i,a)-d(i,c(N)) \right) \\
        \geq&  \frac{k}{2n}\cdot cnD= \frac{kcD}{2}  
    \end{align*}
    Since the random variables $X_i$ for $i = 1, \dots, K$ are independent, we can apply Hoeffding's inequality and obtain 
     \begin{align*}
        &\Pr\left[X' <  \frac{kcD}{4}\right]\\ 
        &\le \Pr\left[\mid X' - \E[X'] \mid \ge  \frac{kcD}{4}\right] \\
        &\le 2 \exp\left(-2 \left( \frac{kcD}{4}\right)^2/ (K4D^2)\right) \\
        &\le 2 \exp\left(-kc^2/32\right).
    \end{align*}
    where the last inequality follows since $K\leq k$. 
    Thus, it is sufficient to choose $$k \ge \frac{\ln(4\frac{\soc(a) -  \soc(c')}{\soc(c'))}m/\varepsilon)64}{c^2}.$$ Now, since from~\Cref{thm:ex-post-FGC}, we know that $\frac{\soc(a) -  \soc(c')}{\soc(c'))}$ is at least equal to  $\varepsilon/2$  and at most $127$, we conclude that there is a    
    $k \in \mathcal{O}(\frac{\ln(m/\varepsilon)}{\varepsilon^2})$, for which (a) is true.

Next,  we show (b). First, we note that we can assume that $k - K \ge \frac{kc}{4}$ since otherwise $X$ cannot be negative anymore. We have that 
    \begin{align*}
        \E[X - X']= 
    \frac{k-K}{n - K} \cdot (cnD-X').
    \end{align*}
Using this, we get that 
\begin{align*}
    &\Pr[X < 0 \mid X' > kcD/4] = \Pr[X - X' < -X' \mid X' > kcD/4] \\ &= \Pr[X - X' - \E[X-X'] < -X' - \frac{k-K}{n - K} \cdot (cnD-X') \mid X' > kcD/4] \\
    &\le \Pr[\lvert X - X' - \E[X-X']\rvert  > \lvert -X' - \frac{k-K}{n - K} \cdot (cnD-X')\rvert \mid X' > kcD/4] \\
    &= \Pr[\lvert X - X' - \E[X-X']\rvert  > \lvert X' (1 - \frac{k-K}{n - K}) +  \frac{k-K}{n - K}(cnD)\rvert \mid X' > kcD/4] \\
    &\le  \Pr[\lvert X - X' - \E[X-X']\rvert  > \frac{kcD}{8} \mid X' > kcD/4]
\end{align*}
where the last inequality follows from the fact that $k\leq 2n$.

Thus, we can also apply Serfling's inequality (\Cref{thm:serfling74}) here and obtain that 
\begin{align*}
    &\Pr[X < 0 \mid X' > kcD/4] \\
&\le 2\exp(- \frac{2(kcD)^2}{64(k-K)D^2}) \\
&\le 2\exp(- \frac{2k^2c^2}{64 \frac{kc}{4}}) \\
& = 2\exp(- \frac{kc^3}{8}).
\end{align*}
Thus, to get a probability of at most $\frac{\varepsilon \soc(c')}{4m (\soc(a) - \soc(c'))}$ of selecting $a$ in this case, we choose $k \ge\frac{8}{c^3} \ln(4\frac{\soc(a) - \soc(c')}{\soc(c')}m/\varepsilon)$. As before since from~\Cref{thm:ex-post-FGC}, we know that $ \frac{\soc(a) - \soc(c')}{\soc(c')}$ is at least equal to  $\varepsilon/2$  and at most equal to a constant,  this gives us the desired asymptotic bound. 
\end{proof}

\end{document}